\newcommand{\nexg}{\textrm{NExG}\xspace}
\newcommand{\neuralex}{\textrm{NeuralExplorer}\xspace}
\newcommand{\reachDest}{$\mathcal{RD}$\xspace}
\newcommand{\NNi}[3][t]{N_{\Phi^{-1}}(#2, #3, #1)}
\newcommand{\bD}{\mathbb{D}}
\DeclareMathOperator*{\argminA}{arg\,min}
\newcommand{\num}[1]{\relax\ifmmode \mathbb #1\else $\mathbb #1$\fi}
\newcommand{\reals}{{\num R}}
\newcommand{\nat}{\mathbb{N}}
\newtheorem{definition}{\textbf{Definition}}
\newtheorem{theorem}{\textbf{Theorem}}
\newtheorem{lemma}{\textbf{Lemma}}
\newtheorem{assumption}{Assumption}
\newtheorem{remark}{Remark}
\newcommand{\bcheckmark}{\textcolor{red}{\checkmark}}
\newcommand{\epsr}{\varepsilon_{\text{rel}}}
\newcommand{\epsa}{\varepsilon_{\text{abs}}}
\newcommand{\beps}{\boldsymbol{\varepsilon}}
\begin{document}

\title{NExG: Provable and Guided State Space Exploration of Neural Network Control Systems using Sensitivity Approximation}

\author{Manish Goyal*, Miheer Dewaskar*, and Parasara Sridhar Duggirala
\thanks{M. Goyal is with the Computer Science Department at University of North Carolina at Chapel Hill, NC 27516 USA. (e-mail: manishg.hsr@gmail.com). }
\thanks{M. Dewaskar is with the Department of Statistical Science at Duke University, Durham, NC 27708 USA. (e-mail: miheerdew@gmail.com).}
\thanks{P. S. Duggirala is with the Computer Science Department at University of North Carolina at Chapel Hill, NC 27516 USA. (e-mail: psd@cs.unc.edu).}
\thanks{* Authors contributed equally to this work.}}

% \author{IEEE Publication Technology,~\IEEEmembership{Staff,~IEEE,}
        % <-this % stops a space
% \thanks{This paper was produced by the IEEE Publication Technology Group. They are in Piscataway, NJ.}% <-this % stops a space
% \def\thefootnote{*}\footnotetext{These authors contributed equally to this work}\def\thefootnote{\arabic{footnote}}

% The paper headers
\markboth{NExG: Provable and Guided State Space Exploration}%
{Shell \MakeLowercase{\textit{et al.}}: A Sample Article Using IEEEtran.cls for IEEE Journals}

% \IEEEpubid{0000--0000/00\$00.00~\copyright~2021 IEEE}
% Remember, if you use this you must call \IEEEpubidadjcol in the second
% column for its text to clear the IEEEpubid mark.

\maketitle

\begin{abstract}
We propose a new technique for performing state space exploration of closed loop control systems with neural network feedback controllers.
Our approach involves approximating the sensitivity of the trajectories of the closed loop dynamics.
Using such an approximator and the system simulator, we present a guided state space exploration method that can generate trajectories visiting the neighborhood of a target state at a specified time.
We present a theoretical framework which establishes that our method will produce a sequence of trajectories that will reach a suitable neighborhood of the target state. 
We provide thorough evaluation of our approach on various systems with neural network feedback controllers of different configurations.
We outperform earlier state space exploration techniques and achieve significant improvement in both the quality (explainability) and performance (convergence rate).
Finally, we adopt our algorithm for the falsification of a class of temporal logic specification, assess its performance against a state-of-the-art falsification tool, and show its potential in supplementing existing falsification algorithms.
\end{abstract}

\begin{IEEEkeywords}
Closed loop control systems, neural networks, sensitivity function, state space exploration, falsification.
\end{IEEEkeywords}

\section{Introduction}
\label{sec:intro}

\IEEEPARstart{D}{esign} and verification of closed loop systems has become an increasingly challenging task.
First, advances in hardware and software have made it easier to integrate sophisticated control algorithms in embedded systems.
Second, control designers now often integrate multiple technologies and satisfy ever increasing behavioral specifications expected from complex Cyber-physical systems (CPS).
Third, the non-linearity in the behaviors of the closed loop systems makes it difficult to predict the outcomes of perturbations in the state or the environment.
Control design for linear systems typically involves techniques such as pole placement and computing Lyapunov functions. However, such analytical method usually do not scale well to hybrid or  non-linear systems encountered in real world applications. Therefore, we have witnessed a surge in neural network based control design in recent times~\cite{sutton2018reinforcement,2016learningcoordination}.
%
% Their application becomes even more amplified in safety critical systems~\cite{2020reinforcementlearning,2020stabilityanalysis}.
%
But despite its desired utility, neural network  controller, due to its  characteristics and behavior, adds to the complexity of the underlying system thus making it more difficult to perform safety analysis. 
% Simulation based verification methods~\cite{sensitiveSpaceEx, duggirala2013verification,fan2015bounded} have shown some promise by tapping the advantages of symbolic and analytical techniques. 

In a typical work flow, the control designer designs a control algorithm and generates a few test cases to check if the specification is satisfied or violated.
However, because of the increasing system complexity, these test cases often do not generalize to the system behavior at large.
This is especially difficult if one has to consider all possible inter leavings of the continuous and discrete behaviors encountered by a modern CPS. Due to different sequence of mode changes, two neighboring states can potentially have divergent trajectories, thus extrapolating the behavior from one state to another becomes challenging.
The problem is further exacerbated by sophisticated neural network based control algorithms.
Since such neural network controllers are typically learned from a finite number of samples, a designer needs to perform additional checks for controller's behavior outside the test suite. However, such manual validation is not practically feasible.

In some instances, the specification is mathematically expressed as as a temporal logic formula that is used by an  off-the-shelf falsification tool for automatically generating a trajectory that violates the specification. But such an approach has a few drawbacks. 
Falsification tools are primarily geared towards finding a violating trace for the given specification, not necessarily to help the designer in systematically exploring the state space. Moreover, the search for a counterexample is performed using stochastic optimization and gradient descent methods. The optimization engine generates random trajectories which would not yield much intuition for the designer about two neighboring behaviors.
Second, if the control designer changes the specification during  testing,
% specification (error threshold $\delta$ or the settling time) is changed, 
the results from the previous runs may no longer be useful.
Third, existing falsification tools require the specification to be provided in a temporal logic such as signal temporal logic or metric temporal logic (STL/MTL). The designer needs to understand these specification languages which despite being useful in the verification phase, may cause hindrance during the design and exploration phases.

\emph{State space exploration} entails systematically generating trajectories to explore desired (or undesired) outcomes of the system. 
For example, for a given safety specification, a designer might like to generate test cases that are close to satisfying or violating the specification. Existing  falsification tools are neither capable of obtaining such executions nor informing the control designer about the additional tests to conduct for validating safety, measuring coverage or exploring new regions.
Although \neuralex\cite{DBLP:conf/atva/0002D20} alleviates some of these concerns by enabling the control designer to perform systematic state space exploration, it suffers from high training time, and it lacks convergence to the true solution as well as a theoretical analysis.

\begin{figure}[!ht]
\centering     %%% not \center
\subfigure[from S-TaLiRo]{\label{fig:bench9-staliro}\includegraphics[width=68mm, height=40mm]{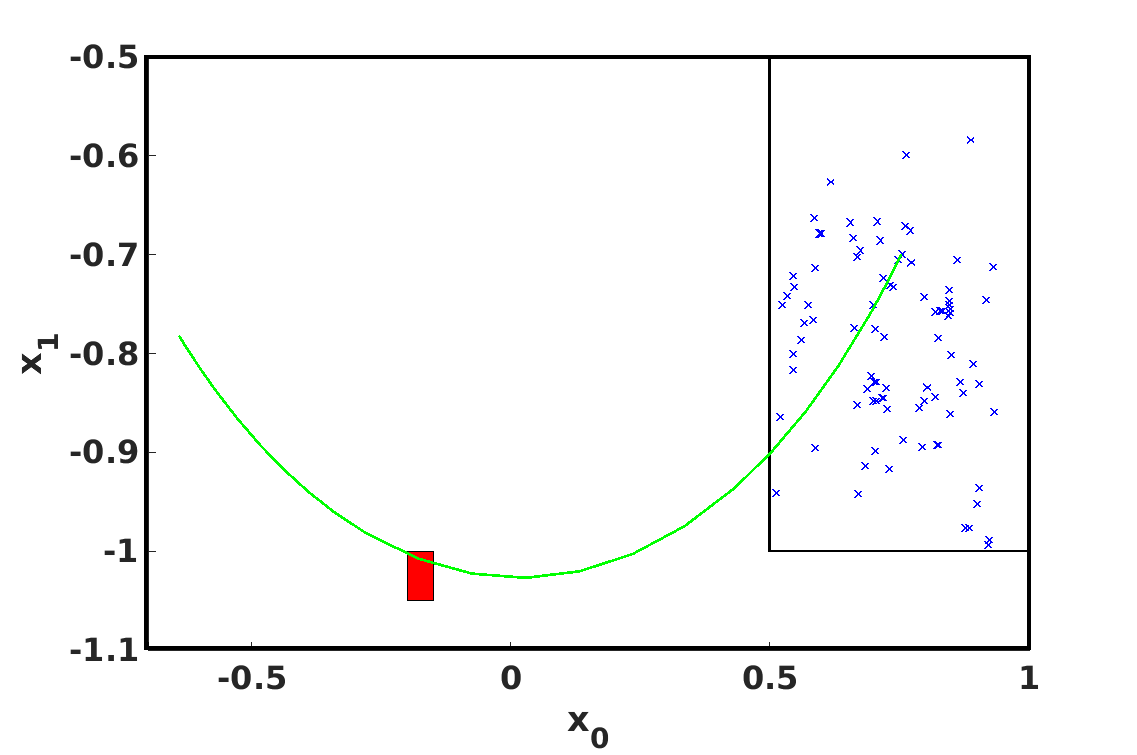}}
\subfigure[from NExG]{\label{fig:bench9-nexg}\includegraphics[width=65mm, height=40mm]{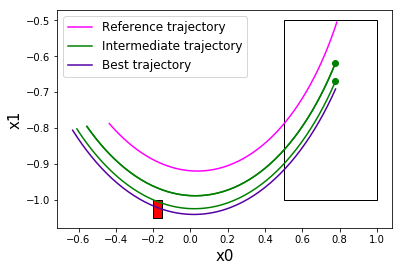}}
\label{fig:intro-falsification}
\caption{\small{Illustration of Falsification schemes. The red-colored box is the unsafe set $U$ and the inner while-colored box is the initial set. Figure~\ref{fig:bench9-staliro} demonstrates that S-TaLiRo conducts falsification in a stochastic manner and takes significant number of iterations to find a falsifying execution to the given safety specification $\neg\Diamond_l U$. Whereas, \nexg finds a valid counterexample in a very few iterations, that too, in a much more directed manner as shown in Figure~\ref{fig:bench9-nexg}.}}
\end{figure}

In this paper, we present a technique that uses neural network approximations of sensitivity over small ranges to to perform state space exploration of closed loop systems with neural network feedback controllers.
The \emph{sensitivity} of a closed loop system at an initial state measures the change in the system trajectory as a result of perturbing the initial state.
Sensitivity can help the designer in developing an intuition about the convergence and divergence of system behaviors. Consequently, the designer can take an active role in systematically navigating the space of new test cases  during control design.
Thus instead of generating executions using a stochastic optimization solver for falsification, our approach explores the state space in a more systematic manner. As a result, it takes  less number of iterations to generate desirable executions as illustrated in Figure~\ref{fig:intro-falsification}.

Since our framework only only requires system traces, it is generalizable to black-box systems in the absence of precise analytical models. Similar to other simulation driven analyses, our technique also depends on using system simulations to tap key information or properties about the system. 
Further, the motivation behind employing neural networks is not only driven by their power to approximate complex functions but also because hardware and software advancements have made these neural networks easy to train and deploy.
We believe that such automated state space exploration is not only useful in man-machine collaborative test case generation, but also for designing safe neural network feedback functions for closed loop control systems.

This paper makes multiple contributions. 
(i) It presents a new state space exploration algorithm \nexg, which is an extension of \neuralex\cite{DBLP:conf/atva/0002D20} for inverse sensitivity learned over small perturbations. 
% and demonstrates a new state space exploration framework, \nexg, guided by the inverse sensitivity approximation only for small perturbations. 
%
(ii) It provides theoretical guarantees supported by strong empirical results. 
We demonstrate that \nexg will converge to a neighborhood of the target point even if the learned neural network only approximates the (inverse) sensitivity function, and it performs much better than \neuralex while requiring less computational resources. 
The theoretical study also gives insight into why \neuralex does not always achieve convergence in searching for a trajectory that reaches the desired destination. 
(iii) It performs extensive evaluation on 20 standard non-linear benchmarks with up-to 6 dimensions state spaces having neural network feedback controller with multiple layers. 
Empirical evaluations demonstrate that \nexg has better convergence than \neuralex, which is supplemented by significant reduction (up to 65\%) in the training time. 
(iv) It presents a simple inverse sensitivity based falsification algorithm for a class of temporal logic safety specifications. Our evaluations exhibit that the presented falsification scheme is capable of finding a more robust falsifying trajectory in significantly less number of iterations as compared to a widely used falsification tool S-TaLiRo~\cite{AnnpureddyLFS11}.
(v) Finally, it presents additional features of the framework such as computing set coverage, customized  state space exploration, and predicting system trajectories.
%
% Finally, 5) It briefly highlights a mechanism to predict system trajectories based on forward sensitivity approximation.
\section{Related Work}

\emph{Verification} or \emph{Reachability analysis} is typically aimed at verifying safety specification(s) of the safety critical control system~\cite{10.1007/3-540-64358-3_34,10.1007/3-540-36580-X_4}. Some of the notable works in this domain are SpaceEx~\cite{spaceex}, Flow*~\cite{10.1007/978-3-642-39799-8_18}, CORA~\cite{althoff2015introduction} and HyLAA~\cite{bak2017hylaa}. These tools use different symbolic representations such as support functions, generalized star etc. for the set of reachable states. While these techniques are useful for proving that the safety specification is satisfied, some other recent works ~\cite{DBLP:journals/automatica/GoyalD20, DBLP:conf/amcc/0002BD20} have explored using reachability analysis to generate counterexamples of interest.

% DBLP:conf/adhs/GoyalD18, DBLP:conf/amcc/0002BD20

\emph{Falsification} is employed to generate executions  that violate a given safety specification~\cite{Fainekos09,DonzeM10} instead of proving safety. In these techniques, the required specification is expressed as a formula in temporal logic such as Metric Temporal Logic (MTL)~\cite{MTLmain} or Signal Temporal Logic (STL)~\cite{STLmain,Kyriakis:2019:SMR:3365919.3358231}. For a given temporal logic specification, falsification techniques use various heuristics~\cite{MCFalsification,AbbasF11,SankaranarayananF12,zutshi2014multiple,DeshmukhFKS0J15,ghosh2016diagnosis} in an  attempt to generate trajectories that violate the specification. Two well known falsification tools are  S-TaLiRo~\cite{AnnpureddyLFS11} and Breach~\cite{BreachAD}. Another work~\cite{10.1145/3302504.3311813} uses symbolic reachability supplemented  by trajectory splicing to scale up hybrid system falsification.

% Simulation driven verification methods~\cite{donze2007systematic,sensitiveSpaceEx, duggirala2013verification,huang2014proofs,fan2015bounded} lie at the intersection \md{Please confirm if ``intersection'' is the correct term.} of reachability and falsification techniques. 
Simulation based state space exploration~\cite{donze2007systematic,sensitiveSpaceEx} and verification~\cite{huang2014proofs,duggirala2013verification,fan2015bounded} have also shown some promise by taking the advantages of symbolic and analytical techniques. Such methods either use bounds on sensitivity~\cite{fan2015bounded,huang2014proofs} to obtain an overapproximation of the reachable set or require analytical model to perform random exploration of the
state space~\cite{sensitiveSpaceEx}. While these techniques can bridge the gap between falsification and verification, they might still suffer due to high system dimensionality and complexity. That is, the number of required trajectories may increase exponentially with system dynamics and dimensions. C2E2~\cite{DBLP:conf/hybrid/DuggiralaPM015}, and DryVR~\cite{DBLP:conf/cav/FanQM017} are some of the well known tools in this domain. 

Given the rich history of application of neural networks in control~\cite{miller1995neural,lewis1998neural,moore2012iterative} and the recent advances in software and hardware platforms, neural networks are now being deployed in various control tasks. Consequently, many verification techniques are being developed for neural network based control systems~\cite{ivanov2021verisig,Tran:2019:SVC:3365919.3358230,Sun:2019:FVN:3302504.3311802,DBLP:conf/hybrid/DuttaCJST19,simGuidedReach2021, 10.1145/3302504.3311807} and some other domains~\cite{DBLP:conf/iclr/TjengXT19,DBLP:journals/tecs/SunHKSHA19,DBLP:conf/cav/HuangKWW17}. Many neural network based frameworks for learning the dynamics or their properties have been proposed in recent times~\cite{pmlr-v80-long18a,raissi2018multistep,chen2018neural}, which further underlines the need of an efficient state space exploration.

In the model checking domain, neural networks have been used for state classification~\cite{Phan:2019:NSC:3313149.3313372} as well as reachability analysis by learning state density distribution~\cite{meng_densitydist_2021} or reachability function in \textrm{NeuReach}~\cite{neuReachtacas2022}. In contrast, \nexg learns sensitivity functions and is geared towards state space exploration.
While \neuralex~\cite{DBLP:conf/atva/0002D20} also learns the sensitivity functions, \nexg approximates the sensitivity of closed loop control systems for small perturbations. The corresponding change in the neural network training framework  reduces its training time by considerable amount which is further supplemented with the choice of a uniform network architecture unlike ~\cite{DBLP:conf/atva/0002D20}. 
Two parameters, \emph{scaling factor} and \emph{correction period}, are introduced in \nexg to maintain the trade off between approximation error and the number of simulations generated. 
As a side effect, these parameters render our framework more flexible and amenable to user control.
Another distinguishing feature is that we provide  theoretical guarantees for \nexg (which eluded \neuralex framework) for its convergence.
We discuss these aspects in detail in Sections~\ref{sec:main-algo} and~\ref{sec:convergence}. 
%Can be regarded as the inverse sensitivity direction (instead of both the magnitude and direction) in the small neighborhood. 
% In our state space exploration algorithm, the magnitude of this perturbation is controlled by a parameter called scaling factor. Smaller the scaling factor/magnitude, better the approximation. However, it comes with the increase in the number of simulations needed for convergence. Thus, we use a parameter called correction period which accelerates the convergence. More number of corrections can result into more compounded error. (Trade off?). 

%
% This makes the presented technique robust w.r.t. to the change in reference trajectory, effective in discharging boundary conditions, and achieve notable improvement in most cases. 
%
% It gives the user the flexibility to perform state space exploration based on a heuristic which is further supported by scaling factor adaptation. In addition, the choice of our neural network architecture is able to save  a considerable fraction of training time as compared to the other work. 
%
\section{Preliminaries}
\label{sec:prelims}

The state of the system is an element typically denoted as $x \doteq (\mathbf{x}_1, \ldots, \mathbf{x}_n) \in \reals^n$.  For $v \in \reals^{n}$, let $\lVert v \rVert$ denote the standard Euclidean norm of the vector $v$. For $\delta \geq 0$, $B_{\delta}(x) \doteq \{x' \in \reals^n ~|~ \lVert x - x' \rVert \leq \delta\}$ is the closed neighborhood around $x$ of radius $\delta$. We will denote the state space of the system by $\bD \subseteq \reals^n$, and the dynamics of the plant as
\begin{equation*}
    \dot{x} = f(x,u)
\end{equation*}

\noindent{%

where $x \in \bD$ is the state of the system and $u \in \reals^m$ is the input. A closed loop system is a control system where the process or the system is regulated by a feedback control action which is automatically computed as a function of system output.
Suppose we use a feedback-function (also called a controller) $g$ that is regulated by the system output, i.e. $u = g(x)$, then have a \emph{closed loop} system that satisfies 
\begin{equation}
\label{eq:sys}
\dot{x} = f\left(x,g(x)\right).
\end{equation}
We will assume that $f$ and $g$ are such that \eqref{eq:sys} has a unique solution $x: \reals \to \bD$ satisfying $x(0) = x_0$ for every $x_0 \in \bD$. For example, by existence and uniqueness theorem for differential equations, this condition is guaranteed if  $\bD = \reals^n$ and both $f$ and $g$ are  Lipschitz functions of their inputs.}

\begin{definition}

Let $\xi(x_0, \cdot): [0,\infty) \to \bD$ denote the system trajectory starting from the initial point $x_0 \in \bD$. In other words, $x(t) = \xi(x_0, t)$ satisfies \eqref{eq:sys} with $x(0) = x_0$ for $t \geq 0$.
Let $\xi^{-1}(x_1, \cdot): [0, \infty) \to \bD$ denote the backward time system trajectory starting from $x_1 \in \bD$, so that $x(t) = \xi^{-1}(x_1,-t)$ is a solution to \eqref{eq:sys} with $x(0) = x_1$ for $t \leq 0$.
\end{definition}

By the uniqueness of solution to \eqref{eq:sys}, given $x_0, x_1 \in \bD$  and $t>0$ such that $\xi(x_0,t) = x_1$, we have the inverse relation $\xi^{-1}(x_1,t) = x_0$. We now adopt the definitions of sensitivity and inverse sensitivity from~\cite{DBLP:conf/atva/0002D20} as shown in Figure \ref{fig:sensitivity-funcs}.

\begin{definition}
\label{def:sensitivity}
Given an initial state $x_0$, vector $v$, and time $t$, the sensitivity $\Phi(x_0, v, t)$ for the system is defined as
\begin{equation}
\label{eq:sensitivity}
\Phi(x_0,v,t) = \xi(x_0+v,t) - \xi(x_0,t).
\end{equation}

We extend the definition of sensitivity to backward time trajectories, denoted by inverse sensitivity, as
\begin{equation}
\label{eq:invsensitivity}
\Phi^{-1}(x_t,v,t) = \xi^{-1}(x_t+v,t) - \xi^{-1}(x_t,t).
\end{equation}
\end{definition}

\begin{figure}[ht]
\centering     %%% not \center
\includegraphics[width=62mm]{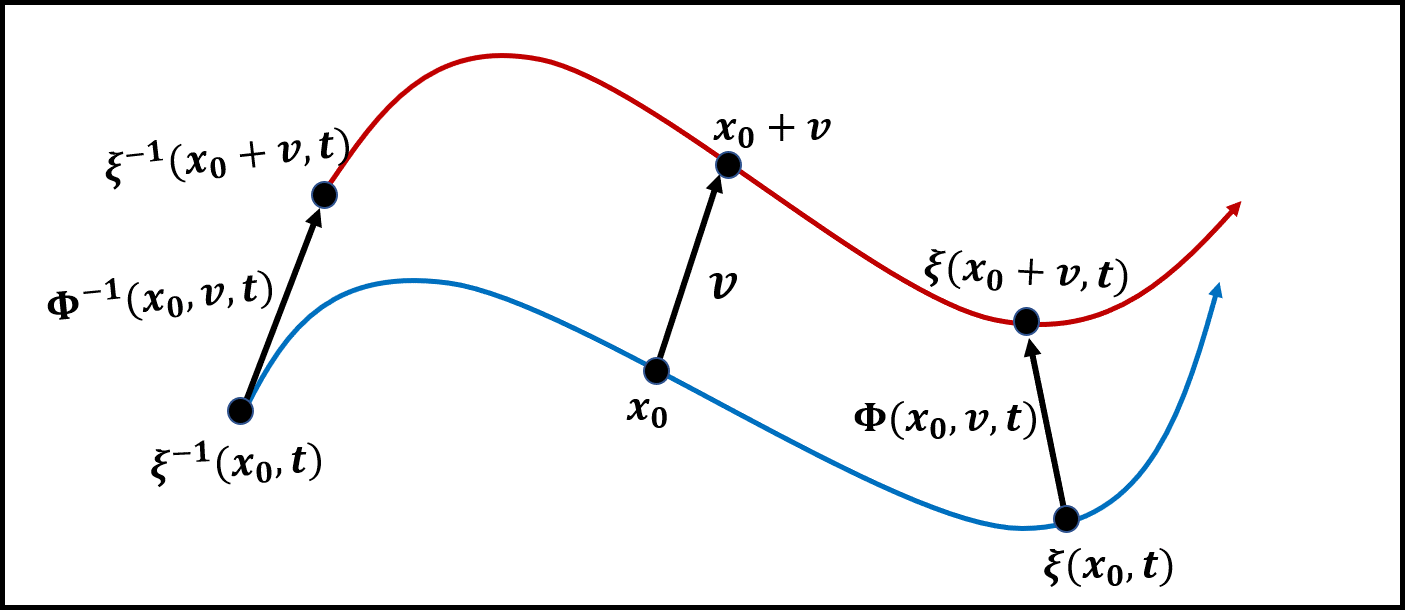}
\caption{\small{Visual description of the sensitivity functions $\Phi$ and $\Phi^{-1}$. The blue and red curves, respectively, denote the unique trajectories that pass through the state of interest $x_0$ and its displaced state $x_0 + v$. The sensitivity $\Phi(x_0, v, t)$ is the displacement between the respective states that the system reaches at time $t > 0$, when  starting from states $x_0 + v$ and $x_0$ at time $t=0$. The inverse-sensitivity $\Phi^{-1}(x_0, v, t)$ is the displacement between the states at $t=0$ that will reach $x_0+v$ and $x_0$, respectively, at time $t > 0$.}}
\label{fig:sensitivity-funcs}
\end{figure}

Informally, sensitivity is the vector difference between states starting from $x_0$ and $x_0 + v$ after time $t$; whereas, inverse sensitivity is the perturbation of the initial state required to displace the state at time $t$ by $v$. In this work, we will primarily focus on using the inverse sensitivity function $\Phi^{-1}$ for performing  systematic state space exploration, but an analogous analysis can also be conducted with the sensitivity function $\Phi$.

For a smooth inverse-sensitivity function $\Phi^{-1}(x_0,v,t)$, let $\nabla_{v} \Phi^{-1}$ denote its Jacobian matrix when considered a function only of its second argument $v$. Then under smoothness assumption, we have the Taylor expansion  
\begin{equation}
\label{eq:approx-sen-func}
    \Phi^{-1}(x_0, v, t) = \nabla_{v}\Phi^{-1}(x_0, 0, t)v + o(\|v\|)
\end{equation}
since $\Phi^{-1}(x_0, 0, t) = 0$.
%\noindent{ in the direction of $v$, known as its \emph{directional derivative}.
%, we have,  
%$\Phi(x_0, v, t) \approx \nabla_{v}\Phi(x_0, 0, t)v.$ 
Therefore learning the inverse-sensitivity function for very small $v$ is akin to learning its directional derivative in the direction $v$. 
\subsection{Learning the inverse sensitivity function using observed trajectories}
\label{sec:approxinvsen}

For testing the system operation on the domain $\mathbb{D}$, one may wish to generate a finite set of trajectories. 
Often, these trajectories are generated using numerical ODE solvers which return system simulations sampled at a regular time step. 
The step size, time bound, and the number of trajectories are specified by the user.
%
% The trajectories can be generated either according to a probability distribution specified by the user or from specific initial configurations provided by them.
%
Given a sampling of a trajectory with step size $h$, i.e., $\xi(x_0,0)$, $\xi(x_0,h)$, $\xi(x_0,2h)$, $\ldots$, $\xi(x_0, kh)$, we make a few  observations. 
First, any prefix of this sequence is also a trajectory of a shorter duration. Hence, from a given set of trajectories, one can truncate them to generate more trajectories having shorter duration.
Second, given two trajectories starting from states $x_0$ and $x'_0$, ($x_0 \neq x'_0$), we can compute the following values for the sensitivity functions:
\begin{align}
\Phi(x_0, x'_0-x_0, t) &= \xi(x'_0,t) - \xi(x_0,t) = x'_t - x_t\label{eq:supplement1} \\
\Phi^{-1}(x_t, x'_t - x_t, t) & = x'_0 - x_0 \label{eq:supplement2}
\end{align}
Note that we can estimate values of $\Phi^{-1}$ based only on samples from a forward simulator $\xi$.

Let us explain how we generate values of the function $\Phi^{-1}(x_t,v,t)$ ($\Phi(x_0,v,t)$) for small values of $v$ in order to learn an approximator $N_{\Phi^{-1}}(x_t,v,t)$ (or $N_{\Phi}(x_0,v,t)$). First, we generate a set of reference trajectories from initial states sampled uniformly at random. Then a fixed number of additional trajectories within a small neighborhood (with radius $\lVert v \rVert \ll 1$) of each initial state are generated. Now, we compute prefixes of the  reference and its neighboring trajectories and use Equations~\ref{eq:supplement1} and~\ref{eq:supplement2} for generating tuples $\langle x_0, v, t, v_{+} \rangle$ and $\langle x_t, v, t, v_{-} \rangle$ such that $v_{+} = \Phi(x_0, v, t)$ and $v_{-} = \Phi^{-1}(x_t, v, t)$. 
We use these tuples to train either a forward sensitivity approximator denoted as $N_{\Phi}$ or an inverse sensitivity approximator $N_{\Phi^{-1}}$. 
Further details on the training procedure for learning the neural networks used in this work are mentioned in Section~\ref{subsec:dir-based-estimator}. The training performance for various benchmark systems and neural network architectures is detailed in Section~\ref{sec:trainingNNs}.

\section{State Space Exploration using local Inverse Sensitivity Approximators}
\label{sec:main-algo}
In this section, we show how to use an inverse sensitivity approximator $N_{\Phi^{-1}}(x_t,v,t)$ for small values of $\|v\|$ in order to perform systematic state space exploration.
State space exploration is typically aimed at finding trajectories that may satisfy or violate a given specification. 
We primarily concern ourselves with safety specifications where the unsafe set is specified as a convex polytopes. In this setup, we would like to find trajectories that reach the set of unsafe states at a specified time, or within a certain time interval.
%In this work, we primarily concern ourselves with a safety specification, that is, whether a specific trajectory reaches a set of states labelled as \emph{unsafe} at time $t$.
%
We begin with a sub-routine for state space exploration approach to reach a given destination. We extend this method to a set of states in a subsequent section. 
% In pursuit to search for such trajectories, we discuss the features of the presented framework to demonstrate its flexibility and other applications.
%

\begin{figure}[!b]
\centering     %%% not \center
\includegraphics[width=80mm]{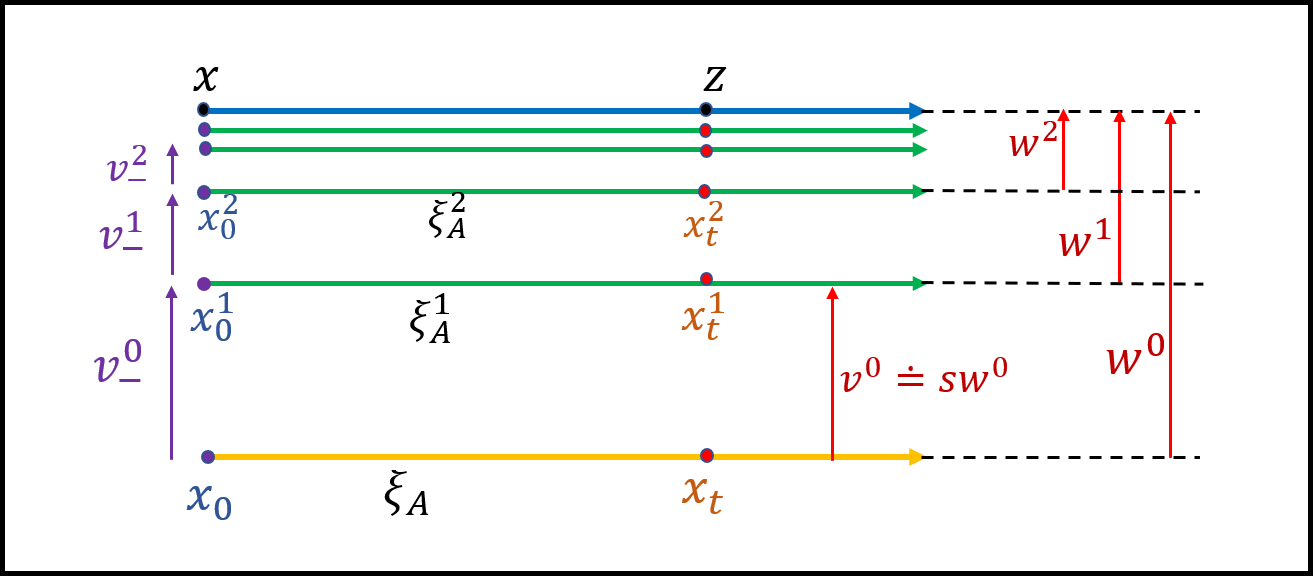}
\caption{\small{Toy execution of Algorithm \ref{alg:reachDest} when our system consists of a constant horizontal vector field in $\reals^2$. Suppose we are given a reference point $x_0 \in \mathbb{R}^2$, the forward simulator $\xi$, the target point $z \in \mathbb{R}^2$ and a time instance $t > 0$. The objective is to find the point $x \in \mathbb{R}^2$, starting from which the system reaches the point $z$ at time $t$ (i.e. $x \doteq \xi^{-1}(z, t)$).  Starting from the initial point $x_0^0 \doteq x_0$, Algorithm \ref{alg:reachDest} successively (for $i=0,1,2\ldots$) increments $x_0^{i}$ by $\nu^{i}_{-} \doteq \Phi^{-1}(x_t^{i}, s (z-x_t^{i}), t)$ for a fixed $s \in (0,1)$, where $x_t^{i} \doteq \xi(x_0^{i}, t)$ is obtained by simulating a new anchor trajectory (denoted by $\xi_A^{i}$) starting at $x_0^{i}$. In this toy example, $\Phi^{-1}$ can be calculated exactly and we have used $p=1$ and $s=0.5$ in Algorithm \ref{alg:reachDest}, but more generally, a local approximator $N_{\Phi^{-1}}$ can be used instead of $\Phi^{-1}$, and the anchor trajectories only need to be calculated once for every $p \geq 1$ steps. The geometric nature of convergence is still preserved under these conditions (Section \ref{sec:convergence}).}}
\label{fig:toy-example}
\end{figure}

\subsection{Reaching a destination at specified time}
\label{subsec:reachDestination}

\begin{figure}[ht!]
\centering     %%% not \center
\includegraphics[width=65mm, height=37mm]{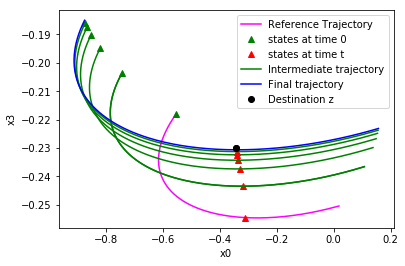}
\caption{\small{Correcting the course of exploration at period $p$ (i.e., simulating a new trajectory after every $p$ steps to aid the search).}}
\label{fig:bench9tanh-cc-4}
\end{figure}

\begin{algorithm}[ht!]
% \small
\SetAlgoLined
\SetKwInOut{Input}{input}\SetKwInOut{Output}{output}\SetKw{Return}{return}
\Input{simulator: $\xi$,  time instance: $t \leq T$, reference trajectory: $\xi_{A}$, destination: $z \in \bD$, course corrections bound: $\mathcal{B}$,  function $N_{\Phi^{-1}}$ that approximates $\Phi^{-1}$, initial set: $\theta$, correction period: $p$, scaling factor: $s$, and threshold: $\delta$.}
\Output{course corrections: $k$, final trace: $\xi(x_0^k, \cdot)$, final distance: $d_a^k$, final relative distance: $d_r$}
%   $x \gets x_{random} \in \theta$ \tcp*{pick a random initial state}
 $x_0^0, x_t^0 \gets \xi_{A}(0), \xi_{A}(t)$ \tcp*{states at time $0$ and $t$}
 $w^0 \gets z - x_t^0$\tcp*{initial vector difference with $z$}
 $d_{init} \gets d_a^0 \gets \rVert w^0\Vert$ \tcp*{initial distance}
$ k \gets 0$\;
\While{$(d_a^k > \delta)$ \& $(k < \mathcal{B})$}{\label{ln:begLoop1}
%   $v_{unit} \gets v/\lVert v\rVert$ \tcp*{unit vector in direction $v$}
  $v^k \gets s \times w^k$\;
%   $x_t^k \gets x_t^{k-1}, x_0^k \gets x_0^{k-1}$\;
  \For{$1 \leq j \leq p$}{\label{ln:begLoop2}
  $\hat{v}_{-}^k \gets N_{\Phi^{-1}}(x_t^k, v^k, t)$ \label{ln:predict} \tcp*{predict $v_{-}^k$}
  $x_0^k \gets \hat{x}_0^{k, \theta} \gets proj_{\theta}(x_0^k + \hat{v}_{-}^k)$ \label{ln:perturbx} \tcp*{perturb $x_0^k$}
  $x_t^k \gets x_t^k + v^k$ \label{ln:perturbx0} \tcp*{progress $x_t^k$}
%   $v \gets z - x_t$ \label{ln:updatev1} \tcp*{update $v$}
%   $v_{unit} \gets v/\lVert v\rVert$\;
  }\label{ln:endLoop2}
  $x_0^{k+1} \gets x_0^k$\;
  $\xi_A^{k+1} \gets \xi(x_0^{k+1}, \cdot)$\label{ln:new-anchor} \tcp*{new anchor}
  $x_t^{k+1} \gets \xi_A^{k+1}(t)$\label{ln:coursecorrect} \tcp*{course correction}
 $w^{k+1} \gets z - x_t^{k+1}$ \label{ln:updatev2}\tcp*{new vector difference}
 $d_a^{k+1} \gets \lVert w^{k+1} \rVert$ \tcp*{update distance to $z$}
 $k \gets k + 1$ \tcp*{increment corrections by 1}
 
 }\label{ln:endLoop1}
 $d_r \gets d_a^k/d_{init}$ \tcp*{update relative distance}
 {\bf return} $(k, \xi_A^k, d_a^k, d_r)$\;
 \caption{$\mathcal{RD}$ algorithm aims at finding a trajectory that reaches $\delta$-neighborhood of $z$ at time $t$. 
 It estimates the inverse sensitivity at each step to perturb the initial state, generates a new simulation from perturbed state after every $p$ steps, and treats this simulation as the new anchor. $k$ is the number of simulations generated.
}
 \label{alg:reachDest}
\end{algorithm}

% We now discuss a basic technique of finding a trajectory of interest for a given destination state. 
%
In the course of state space exploration, the designer might want to explore the system behavior that reaches a given destination or approaches the boundary condition for safe operation. 
Given a domain of operation, and a sample trajectory $\xi$, the control system designer desires to generate a trajectory that reaches a destination state $z$ (with an error threshold of $\delta$) at time $t$. 
Using previous notation, our goal is to find a state $x$ such that the state $\xi(x, t)$ lies in the $\delta$-neighborhood of $z$.

A toy illustration of our the state space exploration technique with oracle access to the exact inverse sensitivity function is shown in Figure~\ref{fig:toy-example}. 
Given an initial point $x_0$, a destination $z$, and time $t$, we successively move the initial point in small steps in the direction specified by $\Phi^{-1}$, so that the trajectory starting from the new initial point at time $t$ moves closer to that target $z$ with each step. 
In practice, since the exact inverse sensitivity function is unknown, we use a neural-network based approximation instead.

Formally, given an \emph{anchor} trajectory starting from initial state $x_0 \in \theta$ (typically chosen at random), we first compute the  vector $ w^0 \doteq z -x_t^0$ where $x_t^0 \doteq \xi(x_0, t)$.
Next, we estimate the inverse sensitivity $\hat{v}_{-}^0 \doteq N_{\Phi^{-1}}(x_t^0, sw^0,t)$ required at $x_0$ to move $x_t$ towards $z$, and then move $x_0$ by $\hat{v}_{-}^0$. Here, the input $s \in (0,1)$, called as the \emph{scaling factor}, controls the magnitude of movement at each step. This process is again repeated:  move the new initial state $x_0^1 \doteq  x_0 + \hat{v}_{-}^0$ by the vector $\hat{v}_{-}^1 = N_{\Phi^{-1}}(x_t^1, sw^1, t)$, where $w^1 \doteq z-x_t^1$ and $x_t^1 \doteq \xi(x_0^1, t)$ is the point reached at time $t$ by a  new simulated trajectory for the system starting from initial state $x_0^1$. This process is repeated until $x_t^k$ reaches a pre-specified neighborhood of $z$.
%
%In the above explanation,  %Moreover, since $N_{\Phi^{-1}}$ is an approximator for $\Phi^{-1}$, the approximation error can compound and the final simulation obtained can be far from desired as illustrated by   Figure~\ref{fig:bench9tanh-wo-cc}.
%
% \noindent{\textbf{Note:} All the proofs are moved to the Appendix due to space constraints.}\\

 Since $N_{\Phi^{-1}}$ is only an approximation of $\Phi^{-1}$, the repeated application of the former will typically compound the approximation error. Hence periodically simulating system trajectories starting from intermediate initial states -- a step that we term \emph{course correction} -- is important to keep the exploration on track. Course correction steps not only confirm that the estimates at time $t$ of the trajectory are indeed close to the $z$, but they also allow our procedure to make suitable adjustments if that is not indeed the case.
 
 Since system simulation is expensive, our framework allows for course correction to be performed as frequently as desired. The parameter $p$ is designated as \emph{correction period} because the new anchor trajectory attempts to correct the course once for every $p$ invocations of $N_{\Phi^{-1}}(\cdot)$. Figure~\ref{fig:bench9tanh-cc-4} shows the effect of performing course corrections after every $4$ steps which reduces the number of course corrections to $7$ from $23$ if we corrected the course at every step.
Algorithm \ref{alg:reachDest}, which we call Reach Destination (abbreviated as \reachDest), provides further details of the our procedure. After termination, algorithm \reachDest returns a 4-tuple consisting of: the number of course corrections $k$, the trace $\xi_A^k$ of the last anchor trajectory, the absolute distance $d_a^k$ between the target $z$ and $\xi_A^k$(t), and the relative distance $d_r$. 

%
% We use terms \reachDest and NExG interchangeably.
%

\emph{Notice that the number of course corrections is same as the number of  trajectories $($simulations$)$ generated. If we were to consider physically simulating the plant (which can be expensive) as a part of the operational cost, it would make sense to minimize the number of trajectories we simulate. Limiting the number of simulated trajectories also makes the exploration algorithm more user-friendly by saving time. Thus, we choose the number of course corrections as the primary metric for performance evaluation.}

\section{Theoretical analysis of the convergence of \textsf{ReachDestination}}
\label{sec:convergence}

We now discuss the convergence of Algorithm \ref{alg:reachDest}. 
As seen in Figure \ref{fig:toy-example}, the distance between $x^i_t$ and the target $z$ contracts by a factor of $0 < 1-sp < 1$ in each iteration if the exact inverse sensitivity is used. That is, 

\begin{equation}
\label{eq:convergence}
   \|x^{k}_t - z\| \leq (1-sp)^k \|x^0_t - z\| 
\end{equation}
%
% (i.e. $\|x^{i+1}_t - z\| \leq (1-sp)\|x^i_t - z\|$). 
Hence, the generated trajectory will reach the desired destination within an error of $\delta$ after $k^*$ iterations where,
\begin{equation}
\label{eq:k-simple}
k^* = \left\lceil \frac{\log (\|x^0_t - z \|/\delta)}{-\log (1-sp)}\right\rceil.
\end{equation}
% iterations (i.e. whenever $i \geq k^*_{ex}$), we must have $x^i_t \in B_{\delta}(z)$. This shows that \reachDest will terminate within $k^*_{ex}$ iterations in this example. 

However, in the \reachDest algorithm, instead of the exact inverse sensitivity function, we only use its approximation.
In this section, we show that it is possible to achieve a similar geometric rate of convergence even with an approximation.
%
% More generally, a similar geometric rate of convergence of \reachDest can be shown for any system when the inverse-sensitivity is exactly available (i.e. $N_{\Phi^{-1}} = \Phi^{-1}$). 
%
Note however that the convergence of \reachDest can fail badly in cases when the system is chaotic or the approximation error is large. To this end we now make assumptions on the regularity of the system and the magnitude of the approximation error that will allow for performance guarantees for \reachDest.

\begin{assumption}
\label{ass:etas}
Suppose there are functions $\eta_1, \eta_2 : [0,T] \to [0,\infty)$ so that
    \begin{equation}
        \label{eq:discr}
        \eta_1(t) \| x - x'\| \leq \|\xi(x,t) - \xi(x', t) \| \leq \eta_2(t) \| x - x'\|
    \end{equation}
    for each $x, x' \in \bD$ and $t \in [0,T]$.
\end{assumption}

 The functions $\eta_1$ and $\eta_2$, sometimes called as witnesses to the discrepancy function \cite{duggirala2013verification}, provide worst-case bounds on how much the distance between trajectories expand or contract starting from different initial states. These functions (and their ratios) can be considered as a measure of the regularity of the system.  Although in practice it may be hard to obtain the  values $\eta_1$ and $\eta_2$ for the system at hand, exponential lower bound for $\eta_1$ and a similar upper bound for $\eta_2$  can be obtained using Gr\"onwall's inequality under a  Lipschitz continuity assumption on the vector field. As shown in the following lemma, Assumption \ref{ass:etas} also ensures that  $\Phi^{-1}(x,v,t)$ is a Lipschitz function of its inputs $x$ and $v$. This is important as such functions can be approximated by Neural networks of bounded depth (see e.g. \cite[Theorem 4.5]{guhring2020expressivity}). 
\begin{lemma}
 If Assumption \ref{ass:etas} is satisfied then for any $t \in [0,T]$ 
 $$
 \|\Phi^{-1}(z',v',t) - \Phi^{-1}(z, v, t)\| \leq \left( 2\|z-z'\| + \|v-v'\|\right)/\eta_1(t)
 $$
\end{lemma}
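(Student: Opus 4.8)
The plan is to reduce the statement to a Lipschitz estimate on the backward-time map $\xi^{-1}(\cdot,t)$, which is exactly what the lower-bound half of Assumption \ref{ass:etas} provides. First I would expand both inverse-sensitivity terms using Definition \eqref{eq:invsensitivity} and regroup,
\[
\Phi^{-1}(z',v',t) - \Phi^{-1}(z,v,t) = \bigl[\xi^{-1}(z'+v',t) - \xi^{-1}(z+v,t)\bigr] + \bigl[\xi^{-1}(z,t) - \xi^{-1}(z',t)\bigr],
\]
so that, after one application of the triangle inequality, it suffices to bound each bracketed difference of backward-time states.

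The key observation is that the inequality $\eta_1(t)\|x-x'\| \le \|\xi(x,t)-\xi(x',t)\|$ says that $\xi(\cdot,t)$ cannot contract distances by more than the factor $\eta_1(t)$, i.e. its inverse is $1/\eta_1(t)$-Lipschitz. To turn this into the bound I need, I would invoke the inverse relation recorded after the Definition, namely $\xi(\xi^{-1}(y,t),t) = y$ for every $y$ in the range of $\xi(\cdot,t)$. Setting $a \doteq \xi^{-1}(z'+v',t)$ and $b \doteq \xi^{-1}(z+v,t)$ and applying the lower bound to the pair $(a,b)$ gives $\eta_1(t)\|a-b\| \le \|\xi(a,t)-\xi(b,t)\| = \|(z'+v')-(z+v)\|$; the same argument applied to $\xi^{-1}(z,t)$ and $\xi^{-1}(z',t)$ yields $\eta_1(t)\|\xi^{-1}(z,t)-\xi^{-1}(z',t)\| \le \|z-z'\|$.

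Combining the two estimates and using $\|(z'-z)+(v'-v)\| \le \|z-z'\| + \|v-v'\|$, I would obtain
\[
\|\Phi^{-1}(z',v',t) - \Phi^{-1}(z,v,t)\| \le \frac{\|z-z'\| + \|v-v'\| + \|z-z'\|}{\eta_1(t)} = \frac{2\|z-z'\| + \|v-v'\|}{\eta_1(t)},
\]
which is the claimed bound.

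I do not anticipate a real obstacle, since the argument is just a chain of triangle inequalities built on the expansivity of $\xi(\cdot,t)$. The only points requiring care are of a well-definedness nature: the division is meaningful only when $\eta_1(t) > 0$ (otherwise the asserted bound is vacuous), and the four arguments $z$, $z'$, $z+v$, $z'+v'$ must lie in the range of $\xi(\cdot,t)$ so that $\xi^{-1}$ is defined there. Both hold in the paper's setting — the former from the Gr\"onwall lower bound noted before the lemma under the Lipschitz assumption on the vector field, and the latter from the existence-and-uniqueness hypothesis on \eqref{eq:sys}.
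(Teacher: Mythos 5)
Your proposal is correct and follows essentially the same route as the paper's proof: both first extract from the lower bound in Assumption \ref{ass:etas} that $\xi^{-1}(\cdot,t)$ is $1/\eta_1(t)$-Lipschitz, and then conclude via the definition of $\Phi^{-1}$ and the triangle inequality. You simply make explicit the regrouping and the two applications of the Lipschitz estimate that the paper leaves as ``suitably applying triangle inequality.''
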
 
\begin{proof} 
By taking $(x,x') = (\xi^{-1}(y,t), \xi^{-1}(y',t))$ in Assumption \ref{ass:etas}, note that $\|\xi^{-1}(y,t)-\xi^{-1}(y',t)\| \leq \|y-y'\|/\eta_1(t)$ for any $y,y'  \in \bD$. The Lemma now follows by suitably applying triangle inequality and the definiton of $\Phi^{-1}$.
\end{proof}
 
In general, we will use the following model to measure the approximation error of $N_{\Phi^{-1}}$. The separate roles played by the relative error $\epsr$ and the absolute error $\epsa$ will become more clear in the context of Theorem \ref{thm:bound-2}. 

\begin{definition}
\label{def:nn-approx-defn}
$N_{\Phi^{-1}}$ is called an $(\epsr,\epsa)$-approximator of  $\Phi^{-1}$ upto radius $r$ and time $T$  if
\begin{equation}
\| N_{\Phi^{-1}}(x_t, v, t) -  \Phi^{-1}(x_t, v, t) \| \leq \epsr \| \Phi^{-1}(x_t, v, t)\| + \epsa \nonumber
\end{equation}
for any $x_t \in \mathbb{D}$, $t \in [0,T]$ and $v \in \reals^n$, with $\|v\| \leq r$. 
\end{definition}

We are now ready state Theorem \ref{thm:bound-2} which bounds the distance between $z$ and iterate $x_t^k$ in the $k$th iterations of the outer loop in \reachDest  when the systejm satisfies Assumption \ref{ass:etas} and $N_{\Phi^{-1}}$ satisfies Definition \ref{def:nn-approx-defn} with sufficiently small error terms  $(\epsr,\epsa)$. To further interpret Theorem \ref{thm:bound-2}, note that: 
\begin{enumerate}[leftmargin=*]
    \item When the additive error $\epsa \approx 0$ is negligible and the relative error satisfies $\epsr \in [0, \eta_1(t)\eta_2(t)^{-1}]$, Equation \ref{eq:thm-bound} holds for any $k \in \nat$ with $r_{\beps}(t)/s \approx 0$. Hence, in this case, a geometric convergence similar to that described for the toy example from above continues to hold with a slightly slower convergence rate (i.e. $-\log(1-sp\gamma_{\beps}(t))$ instead of  $-\log(1-sp)$).
    \item On the other hand, when $\epsa$ is non-negligible (but sufficiently small so that $r_{\beps}(t) \leq r$), the last term in Equation \ref{eq:thm-bound} cannot be ignored. In this case, if rest of the assumptions of Theorem \ref{alg:reachDest} are also satisfied, one obtains the guarantee that $\lim_{k \to \infty} d_a(k) \leq r_{\beps}(t)/s$. Hence if $r_{\beps}(t)/s < \delta$, the termination condition $x_t^k \in B_{\delta}(z)$ will eventually be satisfied whenever $k \geq k^* \doteq \lceil\log(\frac{\delta - r_{\beps}(t)/s}{d_{init}})/\log(1-sp\gamma_{\beps}(t))\rceil$.
    \end{enumerate} 

\begin{theorem}[Convergence of \reachDest]
\label{thm:bound-2}
Fix the domain $\bD=\mathbb{R}^d$ and a time $T > 0$. Suppose 
\begin{enumerate}
   \item The system $\xi$ satisfies Assumption \ref{ass:etas}, 
   \item $N_{\Phi^{-1}}$ is an $(\epsr, \epsa)$-appoximation of $\Phi^{-1}$ for radius $r$ and time $T$, and
   %(see Definition \ref{def:nn-approx-defn}), with 
   \item $\epsr, \epsa \geq 0$ values small enough so that for each $t \in [0,T]$, $\gamma_{\beps}(t) \doteq 1 - \epsr \eta_2(t)\eta_1(t)^{-1} > 0$ and $r_{\beps}(t) \doteq \epsa\eta_2(t)/\gamma_{\beps}(t) \leq r$.
\end{enumerate}
Suppose the inputs $\theta = \mathbb{D}$, $t \in [0,T]$,  the correction period $p$ and the destination $z \in \mathbb{D}$ to Algorithm $\ref{alg:reachDest}$ are given   
 and the scaling factor satisfies $s \in [r_{\beps}(t)/d_{init}, \min(r/d_{init},1/p)]$, where $d_{init} \doteq \|x_t^0 - z\|$ is the distance between the point $x_t^0 \doteq \xi(x_0^0, t)$ and the destination $z$. Then 
the distance after $k$ iterations of the outer loop in Algorithm~\ref{alg:reachDest} satisfies the following bound 
% 
% For any $k \in \mathbb{N}$, let $d_a(k) \doteq \| z - x_t^k \|$ denote the distance between $x_t$ and $z$ after $k$ iterations of the outer loop in Algorithm \ref{alg:reachDest}. Hence $d_a(0)=d_{init}$. If $s \in [r_{\beps}(t)/d_{init}, \min(r/d_{init},1/p)]$ then
\begin{equation}
    \label{eq:thm-bound}
        \| x^k_t - z \|  \leq (1- sp\gamma_{\beps}(t))^k  d_{init} + \frac{r_{\beps}(t)}{s} 
    \end{equation}
for any $k \in \mathbb{N}$.

\end{theorem}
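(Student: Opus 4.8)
The plan is to track a single outer iteration of \reachDest and distill it into a one-step affine recursion $d_a^{k+1} \le (1 - sp\,\gamma_{\beps}(t))\,d_a^k + p\,\epsa\,\eta_2(t)$, and then unroll it as a geometric series. The crucial structural observation is that inside the inner loop the step vector $v^k = s\,w^k$ is held \emph{fixed} while the predicted endpoint $x_t^k$ is advanced by $v^k$ each of the $p$ times; hence the predicted endpoint moves deterministically from $x_t^k$ to $\tilde{x}_t^{k+1} \doteq x_t^k + p\,s\,w^k$, for which $z - \tilde{x}_t^{k+1} = (1-sp)\,w^k$ and so $\|\tilde{x}_t^{k+1}-z\| = (1-sp)\,d_a^k$ (this uses $s \le 1/p$ to keep $1-sp \ge 0$). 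Since $\theta = \bD = \reals^d$, the projection $proj_\theta$ is the identity and can be dropped throughout.

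First I would compare the \emph{actual} accumulated perturbation $\hat{\Delta} \doteq \sum_{j=1}^{p} N_{\Phi^{-1}}(x_t^k+(j-1)v^k,\,v^k,\,t)$ against the \emph{ideal} one $\Delta^{*} \doteq \sum_{j=1}^{p} \Phi^{-1}(x_t^k+(j-1)v^k,\,v^k,\,t)$ obtained by substituting the exact inverse sensitivity. Writing each $\Phi^{-1}$ as the difference of two $\xi^{-1}$ values, the ideal sum telescopes to $\Delta^{*} = \xi^{-1}(\tilde{x}_t^{k+1},t) - \xi^{-1}(x_t^k,t)$; and because the course correction at the start of iteration $k$ guarantees $x_t^k = \xi(x_0^k,t)$, i.e. $x_0^k = \xi^{-1}(x_t^k,t)$, applying $\xi(\cdot,t)$ to $x_0^k + \Delta^{*}$ lands exactly on $\tilde{x}_t^{k+1}$. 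The true new anchor endpoint is $x_t^{k+1} = \xi(x_0^k + \hat{\Delta},\,t)$, so the upper witness $\eta_2$ of Assumption~\ref{ass:etas} gives $\|x_t^{k+1}-\tilde{x}_t^{k+1}\| \le \eta_2(t)\,\|\hat{\Delta}-\Delta^{*}\|$.

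Next I would bound $\|\hat{\Delta}-\Delta^{*}\|$ term by term. Since both sums use the \emph{same} base points $x_t^k+(j-1)v^k$, each summand difference is precisely the approximator error there, which Definition~\ref{def:nn-approx-defn} controls by $\epsr\|\Phi^{-1}(x_t^k+(j-1)v^k,v^k,t)\| + \epsa$, valid as long as $\|v^k\| \le r$. Applying the lower witness $\eta_1$ of Assumption~\ref{ass:etas} to $\xi^{-1}$ (exactly as in the preceding Lemma) yields $\|\Phi^{-1}(\cdot,v^k,t)\| \le \|v^k\|/\eta_1(t) = s\,d_a^k/\eta_1(t)$, so summing over the $p$ inner steps gives $\|\hat{\Delta}-\Delta^{*}\| \le \epsr\,p\,s\,d_a^k/\eta_1(t) + p\,\epsa$. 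Feeding this into the endpoint bound of the previous paragraph and using the triangle inequality $d_a^{k+1} \le \|\tilde{x}_t^{k+1}-z\| + \|x_t^{k+1}-\tilde{x}_t^{k+1}\|$ produces, after collecting terms with $\gamma_{\beps}(t) = 1 - \epsr\,\eta_2(t)/\eta_1(t)$, the advertised recursion.

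Finally, I would discharge the standing hypothesis $\|v^k\| \le r$ by an induction showing $d_a^k \le d_{init}$ for all $k$. The base case is immediate; assuming $d_a^k \le d_{init}$, the constraint $s \le r/d_{init}$ forces $\|v^k\| = s\,d_a^k \le r$ (so Definition~\ref{def:nn-approx-defn} applies), while $s \ge r_{\beps}(t)/d_{init}$ makes the additive term satisfy $p\,\epsa\,\eta_2(t) = sp\,\gamma_{\beps}(t)\,(r_{\beps}(t)/s) \le sp\,\gamma_{\beps}(t)\,d_{init}$, whence the recursion returns $d_a^{k+1} \le (1-sp\gamma_{\beps}(t))d_{init} + p\epsa\eta_2(t) \le d_{init}$. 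With the recursion thus valid for every $k$, unrolling the affine map and summing the geometric series with ratio $\rho = 1 - sp\,\gamma_{\beps}(t)$, together with the identity $p\,\epsa\,\eta_2(t)/(1-\rho) = r_{\beps}(t)/s$ coming from the definition of $r_{\beps}$, gives exactly \eqref{eq:thm-bound}. I expect the main obstacle to be the telescoping step combined with the bookkeeping that the \emph{fixed} step $v^k$ moves the predicted endpoint by $p\,s\,w^k$ (yielding the factor $1-sp$ rather than $(1-s)^p$); the interlocking role of the three scaling-factor constraints in closing the induction is the other delicate point.
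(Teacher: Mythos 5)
Your proposal is correct and follows essentially the same route as the paper's proof: the same telescoping identity $\sum_{j=1}^{p}\Phi^{-1}(x_t^k+(j-1)v^k,v^k,t)=\Phi^{-1}(x_t^k,psw^k,t)$, the same use of $\eta_2$ to push the accumulated approximator error through $\xi(\cdot,t)$, the same per-term bound via Definition~\ref{def:nn-approx-defn} and $\|\Phi^{-1}(\cdot,v,t)\|\leq\|v\|/\eta_1(t)$, and the same induction on $d_a^k\leq d_{init}$ (using the two endpoints of the interval for $s$) before summing the geometric series. The only cosmetic difference is that you name the intermediate target $\tilde{x}_t^{k+1}=x_t^k+psw^k$ explicitly, whereas the paper phrases the same fact as the identity $y(k)=spw(k)$ for the exact increment.
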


The proof of Theorem \ref{thm:bound-2} can be seen to be a suitable contraction argument. Formal details are given below.
\begin{proof}[Proof of Theorem \ref{thm:bound-2}]

    In this proof, for mathematical clarity, we slightly change the notation for the variables used in Algorithm \ref{alg:reachDest}. For each $k \geq 0$, let $x_0(k), x_t(k), w(k)$ and $d_a(k)$ denote the values of the variables $x^k_0, x^k_t, w^k$ and $d_a^k$ after $k$ executions of the outer loop in Algorithm \ref{alg:reachDest}. Hence the equalities $w(k) = z - x_t(k)$, $d(k) = \| w(k)\|$, and $x_t(k) = \xi(x_0(k),t)$ are satisfied for any $k \geq 0$.
    
    Since $\theta = \bD$, unwinding the inner loop in Algroithm \ref{alg:reachDest}, note
    \begin{equation}
    \scriptsize
        \label{eq:x0increment}
        x_0(k+1) = x_0(k) + \sum_{l=1}^p \NNi{x_t(k)+(l-1)sw(k)}{sw(k)}. 
    \end{equation}
    %TODO: Does not exactly match the algorithm, which also changes v(i) in the inner loop. 
    Let $\tilde{y}(k) \doteq x_t(k+1) - x_t(k)$ denote the increment between the trajectory end points between the $k$ and $(k+1)$th iteration. Using the fact that $x_t(k) = \xi(x_0(k),t)$, note 
    \begin{equation}
    \scriptsize
        \label{eq:actualw}
        \tilde{y}(k) = \xi\left(x_0(k+1), t\right) - \xi(x_0(k), t).
    \end{equation}
    The quantity $\tilde{y}(k)$ approximates the true target increment given by 
    \begin{equation}
    \scriptsize
    \label{eq:truew}
     y(k) \doteq \xi\left(x_0(k) + \Phi^{-1}(x_t(k), sp w(k), t), t\right) - \xi(x_0(k), t),
    \end{equation}
    which, using Definition \eqref{eq:invsensitivity} of $\Phi^{-1}$, satisfies
    {\scriptsize
    \begin{align}
        y(k) &= \xi\left(x_0(k) + \xi^{-1}(x_t(k) + sp w(k), t) - \xi^{-1}(x_t(k),t) , t\right) - \xi\left(x_0(k), t\right) \nonumber \\
            &= \xi\left(x_0(k) + \xi^{-1}(x_t(k) + sp w(k), t) - x_0(k), t\right) - x_t(k)  \nonumber \\
            &= x_t(k) + sp w(k) - x_t(k) = sp w(k). \label{eq:walternate}
    \end{align}}%
    
\noindent{Subtracting \eqref{eq:truew} from \eqref{eq:actualw}, and using the upper bound from \eqref{eq:discr}}
    {\scriptsize
    \begin{align}
        &\|\tilde{y}(k) - y(k)\| =  \left\| \xi\left(x_0(k+1), t\right) - \xi\left(x_0(k) + \Phi^{-1}(x_t(k), sp w(k), t), t\right)\right\| \nonumber \\
        &\leq \eta_2(t) \|x_0(k+1) - x_0(k) - \Phi^{-1}(x_t(k), sp w(k), t) \| \nonumber \\
        &= \eta_2(t) \left\| \sum_{l=1}^p \NNi{x_t(k)+(l-1)sw(k)}{sw(k)} \right.\nonumber\\
        &\hspace{10ex} \left. - \sum_{l=1}^p \Phi^{-1}(x_t(k)+(l-1)sw(k), s w(k), t)\right\|\nonumber\\
        &\leq p \eta_2(t)\max_{l=1,\ldots,p} \left\| \NNi{x_t(k)+(l-1)sw(k)}{sw(k)} \right.\nonumber\\
        &\hspace{20ex} \left.- \Phi^{-1}(x_t(k)+(l-1)sw(k), s w(k), t) \right\|  \label{eq:bounddiff}
    \end{align}
    }%
    where the equality is the third line is obtained by using \eqref{eq:x0increment} and rewriting $\Phi^{-1}(x_t(k),spw(k),t)$ as a telescoping sum. To bound the terms under the maximum in  \eqref{eq:bounddiff}, we now use that $N_{\Phi^{-1}}$ is an $(\epsr,\epsa)$-approximator of $\Phi^{-1}$. 
    
    By an application of Definition \ref{def:nn-approx-defn} followed by the lower bound in \eqref{eq:discr}, we obtain
    \begin{equation}
    \scriptsize
    \begin{aligned}
        \|\NNi{x}{v} - \Phi^{-1}(x, v, t)\| &\leq \epsr \|\Phi^{-1}(x, v, t)\| + \epsa  \\
        &\leq \epsr\eta_1(t)^{-1} \|v\| + \epsa
        \end{aligned}
    \end{equation}
    as long as $x \in \bD$, $\|v\| \leq r$ and $t \in [0,T]$. 
    %Hence using $\GradPi{x}{a} = \GradPi{x+a}{0}$ shows that}
  %{\scriptsize
  %  \begin{align}
  %      \left\| \NNi{x_t(k)+(l-1)sw(k)}{sw(k)} - s\GradPi{x_t(k)}{(l-1)sw(k)} w(k)\right\|\nonumber\\ 
  %      \leq s\epsilon \eta_1(t)^{-1} \| w(k) \|\qquad \forall 1 \leq l \leq p
  %  \end{align}}
  
    Our assumptions imply that $s d_{init} \leq r$. Hence, whenever $\|w(k)\| \leq d_{init}$,  
    we have 
    \begin{equation}
    \scriptsize
        \label{eq:wdiffinal}
        \| \tilde{y}(k) - y(k) \| \leq s p\|w(k)\| \epsr \eta_2(t)  \eta_1(t)^{-1} + p \eta_2(t) \epsa 
    \end{equation}
    We will now use the above estimate to obtain a contraction-like argument.

    From Equation \ref{eq:walternate} we have
    {\scriptsize
    \begin{align*}
        w(k+1) - w(k) &= -x_t(k+1) + x_t(k) \doteq - \tilde{y}(k) = -y(k) + y(k) - \tilde{y}(k) \\
                      &= -spw(k) + y(k) - \tilde{y}(k). 
    \end{align*}}
    Note $\|w(k)\| = \| z - x_t(k) \| = d_a(k)$. Combining the above display with Equation \ref{eq:wdiffinal} establishes the following recursive inequality for $d_{a}(k)$ whenever $d_{a}(k) \leq d_{init}$:
    {\scriptsize
    \begin{equation*}
        \begin{aligned}
            d_a(k+1) &= \| w(k+1) \| = \| (1-sp)w(k) + y(k) - \tilde{y}(k) \| \\
            &\leq (1-sp)\|w(k)\| + \| \tilde{y}(k) - y(k)\| \\
           &\leq (1-sp\{1-\epsr \eta_2(t) \eta_1(t)^{-1}\})d_a(k) + p \eta_2(t) \epsa\\
            &= (1-sp\gamma_{\beps}(t)) d_a(k) + p\eta_2(t) \epsa
        \end{aligned}
    \end{equation*}}
    \noindent{%
    where we have used the assumption $sp \leq 1$ in second line,  \eqref{eq:wdiffinal} in the third line, and $\gamma_{\beps}(t) = 1 - \epsr \eta_2(t)\eta_1(t)^{-1}$ in the fourth line. From the assumed lower bound on $s$, we have $p \eta_2(t)\epsa \leq s p \gamma_{\beps}(t) d_{init}$, and the hence the condition $d_a(k) \leq d_{init}$ continues to holds for each $k \in \mathbb{N}$ by induction. By repeatedly applying the above inequality, we obtain
    \begin{equation}
    \scriptsize
       \begin{aligned}
       d_a(k) &\leq (1-sp\gamma_{\beps}(t))^k d_a(0)  + p \eta_2(t) \epsa\sum_{i=0}^{k-1} (1-sp\gamma_{\beps}(t))^{k-1} \\
       &\leq (1-sp\gamma_{\beps}(t))^k d_{init} + \frac{r_{\beps}(t)}{s}.
       \end{aligned}
    \end{equation}
    Since $sp \gamma_{\beps}(t) \in [0,1)$, we used the formula for the infinite geometric sum to obtain the last inequality.
    }

\end{proof}

%  The proof of the theorem is moved to the appendix due to space constraints. 

\subsection{Guidance on designing better approximators}
\label{subsec:dir-based-estimator}
Theorem \ref{thm:bound-2} also provides guidance on how to design good approximators to use with \reachDest. For various approximators which one may consider that satisfy Definition \ref{def:nn-approx-defn}, $\epsa$ will typically be non-zero. Therefore, Theorem \ref{thm:bound-2} suggests that approximators with small additive error $\epsa$ will have better reachability guarantees when used within \reachDest{}. This naturally raises the question of how to design approximators with a small additive error $\epsa$.  One important aspect of this is the evaluation radius $r > 0$. For any given approximator $N_{\Phi^{-1}}$, the additive error $\epsa=\epsa(r)$ in Definition \ref{def:nn-approx-defn} can be considered as an increasing function of the evaluation radius $r$. Therefore, one may hope to obtain estimators with better values of $\epsa(r)$ by evaluating for small values of the radius $r$. 

In Figure \ref{fig:deltas}, we used test trajectories (i.e. system trajectories generated independently of the training data) to empirically estimate the absolute error $\epsa(r)$ for the approximator used in \neuralex evaluated for various values of $r$ on four systems. Even as $r \to 0$, the $\epsa(r)$ values of \neuralex seem to approach a non-zero value $\delta_0 = \lim_{r \to 0} \epsa(r) > 0$. In the light of Theorem \ref{thm:bound-2}, this might explain the lack of convergence of \neuralex that we have observed in certain empirical examples. Indeed, as the iterates $x_t^k$ in the \neuralex approach the target $z$, the error in the approximation of $N_{\Phi^{-1}}$ might possibly be dominating  the increment $\Phi^{-1}(x_t, s(z-x_t), t)$ needed to proceed towards the target.

Motivated by this discussion, in this work, we introduce approximators based on neural networks $\tilde{N}_{\Phi^{-1}}(x_t, v/\|v\|, t)$ that learn  only the direction (and not the magnitude) of the vector $\Phi^{-1}(x_t, v, t) \approx \nabla_v \Phi^{-1}(x_t,0,t) v$, for small values of $\|v\|$. By focusing only on learning the direction, we avoid numerical issues involved in learning  small vectors. Intuitively, if the value $\|\Phi^{-1}(x_t,v,t)\|$ was then known, we could use the oracle-estimator
\begin{equation}
\label{eq:oracle}
  N_{\Phi^{-1}}(x_t,v,t) = \tilde{N}_{\Phi^{-1}}(x_t, \frac{v}{\|v\|}, t)  \|\Phi^{-1}(x_t,v,t)\|
\end{equation}
to approximate $\Phi^{-1}(x_t,v,t)$ for small values of $\|v\|$. Figure \ref{fig:deltas} shows the estimate of the $\epsa(r)$ versus $r$ plot for the oracle estimator in \eqref{eq:oracle}. However, note that outside a testing scenario like that in Figure \ref{fig:deltas}, $N_{\Phi^{-1}}(x_t,v,t)$ cannot be evaluated for the directional approximators. Instead, we directly use $\tilde{N}_{\Phi^{-1}}(x_t,\frac{v}{\|v\|},t)$ in \reachDest algorithm by the modifications mentioned in Remark \ref{subsec:dir-based-est}.

\begin{figure}[ht]
\centering     %%% not \center
\includegraphics[width=90mm]{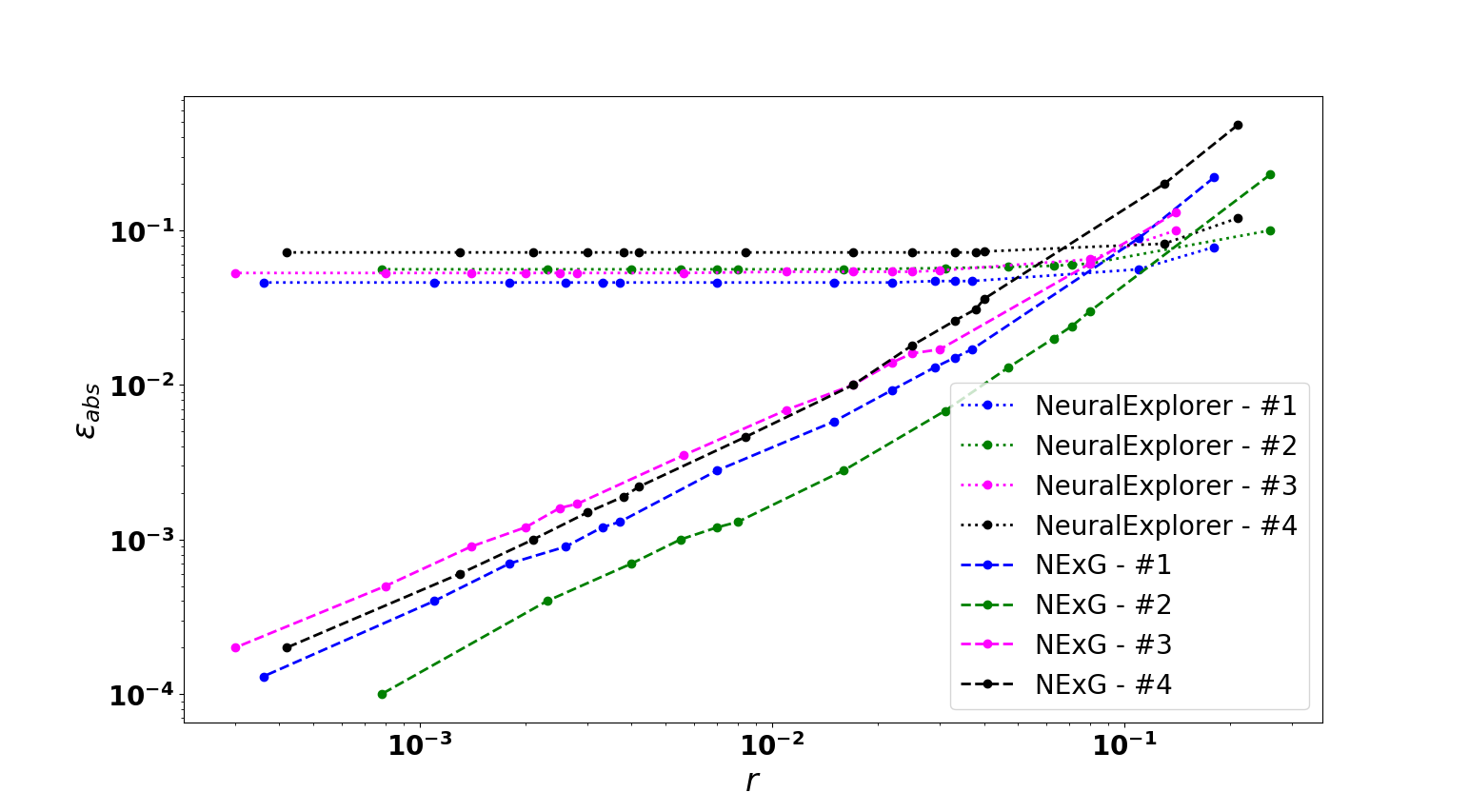}
\caption{\small{Empirical values of the additive error $\epsa$ in Definition \ref{def:nn-approx-defn} (assuming $\epsr \approx 0$) for the approximators $N_{\Phi^{-1}}$ learned for \neuralex and \nexg as a function of the evaluation radius $r$, for four benchmark systems. Note that we have the used the oracle-estimator $N_{\Phi^{-1}}$ given by \eqref{eq:oracle} to estimate the additive error of \nexg, since \nexg only learns a directional approximator $\tilde{N}_{\Phi^{-1}}$.}}

\label{fig:deltas}
\end{figure}

\begin{remark}
\label{subsec:dir-based-est}
% Recall that we aim to learn inverse sensitivity for small values of $\|v\|$. In such a case, it may be sufficient to approximate only the direction of this gradient because the inverse sensitivity approaches its gradient.
%
As mentioned above, it may be helpful to work with directional-approximators $\tilde{N}(x_t,v/\|v\|, t)$ for $\Phi^{-1}(x_t, v, t)$, which learn only the direction (and not the magnitude) of the vector $\Phi^{-1}(x_t, v, t) \approx \nabla_v \Phi^{-1}(x_t, v, t)v$ for small values of $\|v\|$. One may then  work directly with the the directional approximator $\tilde{N}(x_t, v/\|v\|, t)$ in \reachDest by simply replacing 
%For the training of a direction-based estimator, the tuples created are of the form $\langle x_t, \frac{v}{\|v\|}, t, \frac{v_{-}}{\|v_{-}\|} \rangle$ where $v_{-} \doteq \Phi^{-1}(x_t, v, t)$. 
 line~\ref{ln:predict} with the two statements
\begin{enumerate}
\item[$(8a)$] $\hat{v}_{-}^k \gets \tilde{N}_{\Phi^{-1}}(x_t^k, \frac{v^k}{\|v^k\|}, t)$ 
\item[$(8b)$] $\hat{v}_{-}^k \gets (s \times \|v^k\|) \cdot \hat{v}_{-}^k$,
\end{enumerate}
while keeping the rest of the algorithm unchanged.
\end{remark}

The \nexg algorithms used in the subsequent sections are based on directional-approximators $\tilde{N}_{\Phi^{-1}}$ and the corresponding modifications to \reachDest as mentioned in Remark \ref{subsec:dir-based-est}.

\section{Evaluation}
\label{sec:evaluation}

We choose a standard benchmark suite of control systems with neural feedback functions~\cite{10.1145/3302504.3311807,Tora_1996,ARCH20:ARCH_COMP20_Category_Report_Artificial,ARCH19:Verification_of_Closed_loop_Systems} for evaluation. To be more specific, systems \#10-\#12 are adopted from~\cite{Tora_1996}, system \#13 is adopted from~\cite{10.1145/3302504.3311807} and the rest of the benchmarks are adopted from the ARCH  suite~\cite{ARCH20:ARCH_COMP20_Category_Report_Artificial, ARCH19:Verification_of_Closed_loop_Systems}. Considered benchmarks span 6 dimensional systems, controllers with 6-10 hidden layers and 100-300 neurons per layer (c.f. Table~\ref{tab:learning-inverse-sensitivity}). The tool and the artifacts are available at  \url{https://github.com/manishgcs/NExG}.

\subsection{Network architecture and Training}
\label{sec:trainingNNs}

% \texttt{Double Pendulum} benchmark has two variants.
%
% \texttt{Double Pendulum} benchmark has two variants - First one has a less robust controller as compared to the second variant. 
%
For each benchmark, we sample a fixed number (40) of initial states chosen uniformly at random and use a given ODE solver to generate \emph{anchor} trajectories from these initial states.
We further generate ten additional trajectories from the states randomly sampled in the small neighborhood ($\lVert v \rVert = 0.01$) of each initial state. We choose a step size ($h = 0.01$). Max step $(T)$ for each system is shown in Table ~\ref{tab:learning-inverse-sensitivity}, however, a user can pick any suitable values for these parameters including the number of anchor trajectories.
%
% The number of trajectories for training is empirically chosen. 
Our preliminary analysis shows that increasing the number of \emph{anchor} trajectories from $40$ to $50$ slightly improves the MRE, however, this improvement comes at the expense of higher training time. 
These trade offs between the amount of data required, training time, distance between neighboring points, and accuracy of the approximation are subjected to future research.
Nonetheless, the evaluations presented in subsequent sections  underscore the promise of our approach even when the resources are constrained.
The data used for training the neural network is collected as previously described. We use 90\% of the data for training and 10\% for testing.

\begin{table}[!ht]
\scriptsize
\centering
\caption{\textbf{Training $N_{\Phi^{-1}}$.}
Each neural network feedback controller configuration is given as the \emph{number of hidden layers} and the maximum of \emph{neurons per layer}. \emph{Dims} is the number of system variables and $T$ is simulation time bound. The training performance is measured in \emph{mean squared error} (MSE) and \emph{mean relative error} (MRE).}
\label{tab:learning-inverse-sensitivity}
\begin{tabular}{*{8}{c}}
\hline
\multicolumn{4}{|c|}{\emph{System}} & \multicolumn{4}{|c|}{$N_{\Phi^{-1}}$ Training} \\
& & & NN & Max & Training & & \\
No. & Name & Dims & controller & steps & Time & MSE & MRE \\
 & & & config & (T) & (min) & & \%\\

%\cline{1-8}
%& & & (sec) & & Time (min) & $~$ & $~$ \\
\hline
\hline
\#1 & \textsc{Arch}-1 & 2 & 6/50 & 200 & 8.0 & 0.018 & 16.0\\
\hdashline
\#2 & \textsc{Arch}-2 & 2 & 7/100 & 300 & 11.0 & 0.038 & 15.0\\
\hdashline
\#3 & \textsc{Arch}-3 & 2 & 5/50 & 350 & 14.0 & 0.014 & 10.2\\
\hdashline
\#4 & S. Pend. & 2 & 2/25 & 250 & 9.0 & 0.021 & 15.0\\
\hdashline
\#5 & \textsc{Arch}-4 & 3 & 7/100 & 250 & 10.0 & 0.007 & 6.0\\
\hdashline
\#6 & \textsc{Arch}-5 & 3 & 7/100 & 250 & 10.0 & 0.009 & 13.0\\
\hdashline
\#7 & \textsc{Arch}-6 & 3 & 6/100 & 250 & 10.0 & 0.007 & 5.6\\
\hdashline
\#8 & \textsc{Arch}-7 & 3 & 2/300 & 250 & 11.0 & 0.01 & 9.5\\
\hdashline
\#9 & \textsc{Arch}-8 & 4 & 5/100 & 250 & 12.0 & 0.005 & 8.0\\
\hdashline
\#10 & \textsc{Arch}-9-I & 4 & 3/100 & 250 & 12.0 & 0.005 & 4.2\\
\hdashline
\#11 & \textsc{Arch}-9-II & 4 & 3/20 & 250 & 11.0 & 0.005 & 7.3\\
\hdashline
\#12 & \textsc{Arch}-9-III & 4 & 3/20 & 250 & 11.0 & 0.005 & 4.5\\
\hdashline
\#13 & Unicycle & 4 & 1/500 & 250 & 11.0 & 0.005 & 5.3\\
\hdashline
\#14 & D. Pend.-I & 4 & 2/25 & 250 & 11.0 & 0.006 & 8.1\\
\hdashline
\#15 & D. Pend.-II & 4 & 2/25 & 200 & 10.0 & 0.02 & 17.0\\
\hdashline
\#16 & I. Pend. & 4 & 1/10 & 200 & 9.0 & 0.007 & 9.0\\
\hdashline
\#17 & ACC-3L & 6 & 3/20 & 200 & 11.0 & 0.004 & 8.3\\
\hdashline
\#18 & ACC-5L & 6 & 5/20 & 200 & 11.0 & 0.004 & 6.7\\
\hdashline
\#19 & ACC-7L & 6 & 7/20 & 200 & 12.0 & 0.004 & 6.7\\
\hdashline
\#20 & ACC-10L & 6 & 10/20 & 200 & 12.0 & 0.005 & 12.0\\
\hline
\end{tabular}

\end{table}

We use Python Multilayer Perceptron implemented in \texttt{Keras 2.3}~\cite{chollet2015keras} library with Tensorflow as the backend. Every network has 3 layers with 512 neurons each and an output layer. The input  layer's activation function is \emph{Radial Basis Function} (RBF) with \emph{Gaussian basis}~\cite{rbfkeras2019}.
The other two layers have \texttt{ReLU} activation (except System \#11 which has its feedback controller trained  with \texttt{Sigmoid} activation) and the output layer has \emph{linear} activation function.
The optimizer used is stochastic gradient descent. 
The network is trained using \textrm{Levenberg-Marquardt} backpropagation algorithm optimizing the mean absolute error loss function and the Nguyen-Widrow initialization.
The training and evaluation are performed on a system running Ubuntu 18.04 with a 2.20GHz Intel Core i7-8750H CPU with 12 cores and 32 GB RAM. 
The network $N_{\Phi^{-1}}$ training time, \emph{mean squared error} (MSE) and \emph{mean relative error} (MRE) for learning $\Phi^{-1}$ are given in Table~\ref{tab:learning-inverse-sensitivity}. 
% $MRE$ is the average of the relative error given by $\frac{\|\hat{v}_{-} - v_{-}\|}{\|v_{-}\|}$, where $\hat{v}_{-}$ is the predicted value of $v_{-}$. 
% Training $N_{\Phi}$ is performed in a similar fashion but we  omitted the details to avoid redundancy.

Although empirical, our choice for network architecture and evaluation metrics are somewhat motivated  by \neuralex~\cite{DBLP:conf/atva/0002D20}. 
But the network training time in~\cite{DBLP:conf/atva/0002D20} and training error are notably high,  which may render the work unfavorable for many practical applications. %Thus, we preferred using a different non linear multi variate radial basis function (RBF) as an activation function. 
After performing experiments on multiple activation functions, we choose a non linear multi-variate radial basis function (RBF) with Gaussian basis as the input layer's activation function because we observe that its evaluation performance is consistent across benchmarks.

% That is, it helped in approximating the sensitivity gradients better (c.f. MRE in Table~\ref{tab:learning-inverse-sensitivity}) while consuming less time for training. 
%
%We observed that systems such as \#2, \#3 and \#4 with high ($>$ 10\%) MRE stabilize to the origin quickly thus making the sensitivity to be very small for higher time steps which, in turn, leads to high relative error.

\subsection{\textsf{ReachDestination} Evaluation}

% Further, loss functions such as MSE and MRE may not be good measures of the approximation error for small $\|v\|$.
We analyze the performance of Algorithm~\ref{alg:reachDest} by picking, at every invocation of the algorithm, a random reference trajectory $\xi_{A}$, a time $t \in [0,T]$, and a target state $z$, reachable at time $t$ in the domain of interest.
We choose them randomly to not bias the evaluation of our search procedure to a specific sub-space.
The performance metrics used to evaluate various runs are \emph{number of course corrections} ($k$) and/or \emph{minimum relative distance} ($d_r$). The threshold $\delta$ is fixed as 0.004. 
%
%

% [leftmargin=*]
% [wide, labelindent=0pt]
% \begin{enumerate}[leftmargin=*]

    \textbf{B.1 Comparison with \texttt{NeuralExplorer}~\cite{DBLP:conf/atva/0002D20}:} 
    The neural network architectures used in this work are the same as those used in \neuralex.
    %We use same architecture as ours for training in \neuralex. 
    For a given $N \in \mathbb{Z}_+$ number of anchor trajectories, \neuralex creates all possible $C(N, 2)$ pairs of these trajectories for training as it attempts to learn the inverse sensitivity function for any $v \in \reals^n$ in the domain of interest. Whereas \nexg focuses  on learning only the direction of the inverse sensitivity.  So we only sample a few random points (say, $y$) in a small neighborhood of the initial state of each anchor trajectory and generate total $y \times N$ pairs. As a consequence, we achieve up to 60\% reduction in the training time. 
    Further, the state space exploration algorithm in \neuralex predicts inverse sensitivity directly for $w^k$ and course corrects at every step; whereas, \nexg predicts only the direction of the inverse sensitivity vector needed to move in the direction $w^k$. Hence the \nexg  search is guided by additional parameters like the scaling factor $s$ and the correction period $p$.
    We report in Table~\ref{tab:result-reachDest-wrt-nerualex} the mean values of $k$ and $d_r$ computed over $250$ runs of each technique for each system. 
    % It is important to note that all three inputs $\xi_A$, $z$ and $t$ are randomly selected (from the domain of interest) at every invocation of both state space exploration schemes.
    The evaluation shows that \nexg has a relative error of 1-4\% (with considerably fewer iterations) as compared to the relative error of 5-15\% for \neuralex.
    %
    % This is a performance gain of an order of magnitude within same number of iterations, that too with approximately 60-65\% reduction in the neural network training time.
    %
    \begin{table}
\scriptsize
\centering
\caption{\textbf{Performance evaluation w.r.t. NeuralExplorer.} The common parameters values are $\delta=0.004$ and $\mathcal{B}=30$. We fix $s=0.5$ and $p=2$ for NExG. $k$ is the number of simulations generated by respective algorithms and $d_r\%=(d_a^k/d_{init})\times 100$ where $d_a^k$ is the distance between the state of final ($k^{th}$) simulation at time $t$ and the destination.}
\label{tab:result-reachDest-wrt-nerualex}
\begin{tabular}{|*{10}{ccccc||ccccc|}}
\hline
Sys. & \multicolumn{2}{c|}{NeuralExp.} & \multicolumn{2}{c||}{NExG} & Sys. & \multicolumn{2}{c|}{NeuralExp.} & \multicolumn{2}{c|}{NExG} \\
& $k$ & $ d_r\%$ & $k$ & $ d_r\%$ & & $k$ & $ d_r\%$ & $k$ & $ d_r\%$\\
\hline
\hdashline
\hline
\#1 & 21 & 14 & 5 & 1.8 & \#11 & 30 & 6.1 & 4 & 0.4  \\
\hdashline
\#2 & 27 & 9.6 & 6 & 1.3 & \#12 & 30 & 13.8 & 4 & 0.6 \\
\hdashline
\#3 & 10 & 6.4 & 5 & 2.7 & \#13 & 30 & 11.4 & 13 & 2.3 \\
\hdashline
\#4 & 18 & 5.5 & 5 & 1.4 & \#14 & 29 & 7.6 & 10 & 1.9 \\
\hdashline
\#5 & 30 & 13.3 & 7 & 2.7 & \#15 & 26 & 12.7 & 8 & 3.7 \\
 \hdashline
 \#6 & 24 & 11.2 & 5 & 3.9 & \#16 & 29 & 22.5 & 6 & 2.2 \\
 \hdashline
 \#7 & 28 & 12.5 & 5 & 2.1 & \#17 & 30 & 8.0 & 8 & 0.4 \\
 \hdashline
\#8 & 23 & 5.5 & 7 & 1.7 & \#18 & 30 & 6.7 & 7 & 0.4 \\
 \hdashline
\#9 & 30 & 12.3 & 12 & 3.5 & \#19 & 30 & 5.5 & 6 & 0.3 \\
 \hdashline
\#10 & 29 & 4.3 & 10 & 1.0 & \#20 & 30 & 5.4 & 17 & 1.6 \\
  \hline
\end{tabular}

\end{table}
    
    \textbf{B.2 Correction period ($p$) and scaling factor ($s$):} 
    We fix the tuple ($\xi_{A}$, $z$, $t$) in each benchmark, run \reachDest for $s \in \{0.01, 0.1\}$, $p \in \{1, 5, 10\}$. The evaluation results in Table~\ref{tab:result-reachDest-partial} are presented to make some key observations and emphasize that the technique performs consistently across systems.
    For a fixed tuple ($\xi_{A}$, $z$, $t$), change in the number of trajectories ($k$) generated by \reachDest is roughly inversely proportional to the change in the product $s \cdot p$. For example, first row (i.e., $s=0.01$) in System \#7 
    shows that the number of trajectories ($k$) reduces from $\sim 400$ to $\sim 40$ (10 fold reduction) when course correction is performed only once for every $10$ steps instead of at every steps. This trend is observed in almost all systems for appropriate $s \cdot p$ values.
    It can also be observed that the number of course corrections ($k$) remains roughly the same for different ($s$, $p$) pairs as long as the product $s \cdot p$ is same. For e.g., the value of $k$ for pairs ($s=0.01$, $p=10$) and ($s=0.1$, $p=1$) is $\sim 40$ in System \#1. These results are consistent with the theoretical bound (Equation~\ref{eq:thm-bound}) that decreases geometrically in $k$ for a fixed value of $s \cdot p$.

\begin{table}[tp]
\footnotesize
\centering
\setlength{\tabcolsep}{10pt}
\caption{\reachDest \textbf{evaluation}. $d_{init}$ is the distance between the state of the initial reference trajectory at time $t$ and destination $z$, $s$ is the scaling factor, and $p$ is the correction period.}
\label{tab:result-reachDest-partial}
\begin{tabular}{|*{6}{c||c|c|c|c|c|} }
\hline
System & $d_{init}$ & s & \multicolumn{3}{c|}{Course corrections $k$} \\
% System & $d_{init}$ & s & \multicolumn{3}{c|}{} \\
% \cdashline{4-6}
& & & $p = 1$ & $p = 5$ & $p = 10$ \\
\hline
\hline
  \multirow{2}{*}{\#1} &  \multirow{2}{*}{0.43} & 0.01 & 528 & 105 & 52 \\
  \cdashline{3-6}
  & & 0.1 & 52 & 10 & 3 \\
 \hline
 \multirow{2}{*}{\#7} &  \multirow{2}{*}{0.39} & 0.01 & 418 & 83 & 41 \\
 \cdashline{3-6}
 & & 0.1 & 41 & 7 & 3 \\
  \hline
  \multirow{2}{*}{\#11} &  \multirow{2}{*}{0.37} & 0.01 & 381 & 76 & 37  \\
  \cdashline{3-6}
 & & 0.1 & 37 & 7 & 3 \\
  \hline
\multirow{2}{*}{\#17} &  \multirow{2}{*}{0.85} & 0.01 & 550 & 109 & 54 \\
 \cdashline{3-6}
 & & 0.1 & 54 & 10 & 3\\
   \hline
\end{tabular}

\end{table}

     \textbf{B.3 Satisfying initial conditions:} As the algorithm increments the initial state $x_0^k$ in line~\ref{ln:perturbx}, it may happen that next $\hat{x}_0^{k+1} \doteq (x_0^k + \hat{v}_{-}^k)$ is not in the initial set $\theta$, thus violating the initial constraint.
     %resulting into unsatisfiability of the reachability problem definition. 
    %
    We address this problem by picking a element-wise projection of $\hat{x}^k_0$ in $\theta$ denoted as $\hat{x}_{0}^{k, \theta} \doteq proj_{\theta}(\hat{x}_0^k) \in \theta$, defined by  $\hat{x}_{0}^{k, \theta} \doteq \argminA_{x \in \theta} \lVert x - \hat{x}_0^k \rVert$.
    Consider System \#10 with $2^{nd}$ component of its initial set hyper-rectangle, given as    $\mathrm{\theta[2]} \doteq [-0.5, 0.0]$. Both Figures ~\ref{fig:bench9-w-adapt-1-01} and ~\ref{fig:bench9-w-adapt-greedy-01} demonstrate how the course of exploration makes a detour around the initial set boundary in order to satisfy its constraints.

    \begin{figure}[!ht]
\centering     %%% not \center
\subfigure[Original \reachDest ]{\label{fig:bench9-w-adapt-1-01}\includegraphics[width=70mm, height=45mm]{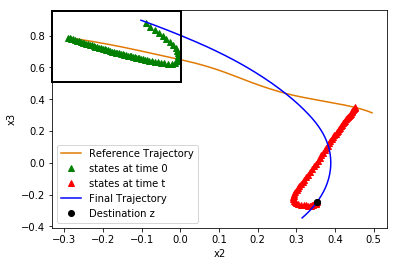}}
\subfigure[Axis-aligned \reachDest ]{\label{fig:bench9-w-adapt-greedy-01}\includegraphics[width=70mm, height=45mm]{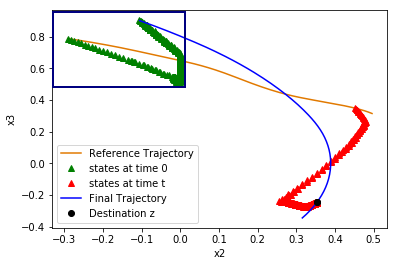}}

\caption{\small{\reachDest and its customizations can  provide algorithms for state space exploration with  a constrained initial set. In the figure, the inner  box represents the initial set.
Original \reachDest generates smoother trajectories because it moves in the direction of the target at each step (Figure~\ref{fig:bench9-w-adapt-1-01}), however \reachDest can be customized to obtain a different state space exploration method.}}
\label{fig:features}
\end{figure}
    
    \textbf{B.4 Customizing the state exploration algorithm:}
    Note that the inverse sensitivity approximator $N_{\Phi^{-1}}$ is agnostic to the exact state space exploration technique. While our implementation of \reachDest uses this estimator to proceed in a straight line direction towards the destination (i.e. $v^k$ has the same direction as $z - x_t^k$), the progress direction can also be customized. This allows for  designing custom state space exploration algorithm by prioritizing trajectories along different directions at different steps.
    % Additionally, we compare our approach with an \emph{axes-aligned} baseline technique described below.
    For an $n$-dimensional system, at every step, one might be interested in picking a direction among the $2n$ unit vectors $\{\pm e_i : i = 1, 2, \ldots, n\}$ that are aligned with the orthonormal axes. For instance, one can choose the direction vector that is closest to $z-x_t$. %; the \reachDest procedure would remain the same. 
    The illustration of one such \emph{axis-aligned} approach is given in Figure~\ref{fig:bench9-w-adapt-greedy-01}. It emphasizes that instead of \reachDest, we can also use some other state space exploration algorithm that requires an inverse sensitivity approximator.
    % \reachDest can be used to to measure the coverage of a given set of states $\mathcal{Z}$.
    
    \textbf{B.5 Coverage analysis:}  Given an initial set $\theta \subseteq \bD$, we assess the coverage among the set of reachable states at time $t \in [0, T]$ by calculating the proportion of points in the reachable set that \reachDest converges to, within a neighborhood of radius $\delta$.
    To obtain a convenient representation of the reachable set for an $n$-dimensional system, we use a polygon with faces in the $2n$ template directions $\{\pm e_i: i = 1, 2, \ldots, n\}$. While we have used orthonormal vectors as template directions, different set of template directions can yield a less conservative approximation of the reachable set.  
     The polygon in our experiment was obtained by starting from the destination state of random anchor trajectory $\xi_A(\cdot)$ at time $t$, and using a modification of \reachDest to maximally perturbs the destination state in each of the template directions. This provides as many extremal points as the number of template directions, and can be used to construct the bounding polygon (e.g. see the black rectangle in Figure \ref{fig:coverage-b1}), denoted by $\mathcal{Z}$,  as an approximation of the reachable set.  
    Next, to assess the coverage for $\mathcal{Z}$, we sampled $200$  points from  $\mathcal{Z}$  uniformly at random, and examined which were the ones that \reachDest could converge within a  $\delta=4\times 10^{-3}$ neighborhood at time $t$ starting from the initial set $\theta$. As shown in  
     Figure~\ref{fig:coverage1-b1}, 137 out of these 200 points were reached from \reachDest, with the color of the point (green or red) representing if \reachDest was successful or not. For each of these points in $\mathcal{Z}$, we also plot the best initial point output by \reachDest. Most of these red initial points (that did not reach the destination) lie on the boundary of the initial set $\theta$, suggesting that the trajectory that can possibly reach its destination might perhaps start from a state outside the given initial set. However, a less conservative approximation of the reachable set obtained by picking a different set of template directions yields much better coverage as demonstrated in Figure~\ref{fig:coverage2-b1}.

    \begin{figure}[H]
\centering     %%% not \center
\subfigure[Coarse approximation]{\label{fig:coverage1-b1}\includegraphics[width=75mm, height=45mm]{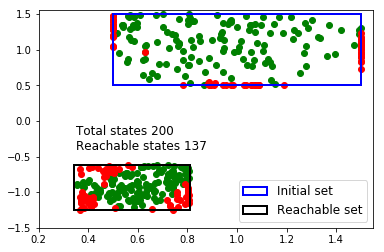}}
\subfigure[Less conservative  approximation]{\label{fig:coverage2-b1}\includegraphics[width=75mm, height=45mm]{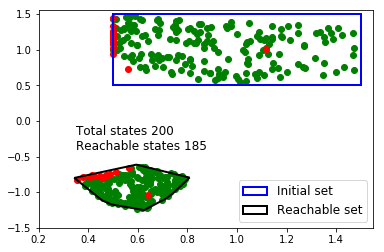}}
\caption{\small{Measuring coverage of a reachable set in System \#1. For every red colored state in the destination set, \reachDest could not find a trajectory that reaches within its $\delta$-neighborhood at time $t$.}}
\label{fig:coverage-b1}
\end{figure}

\section{Falsification of a Safety Specification}
\label{sec:falsification}

Given system and a corresponding safety specification in either Signal or Metric Temporal Logic~\cite{STLmain,MTLmain}, \emph{falsification} is aimed at finding a system parameter or an input that violates the specification.
Existing falsification schemes generate executions using some heuristics or stochastic global optimization and compute their robustness with respect to a safety specification provided as a set of states. 
\emph{Robustness} ($\rho \in \mathbb{R}$) is a measure that quantifies how deep is the execution within the set or how far away it is from the set. Informally, it determines the degree to which an execution satisfies ($\rho > 0$) or violates ($\rho < 0$) a given safety specification. Our framework can currently handle a subset of MTL formulas.

\begin{equation}
\label{eqn:mtlformula}
\varphi ::= \top ~|~ p ~|~ \neg \varphi ~|~ \top \mathcal{U}_l \varphi
\nonumber
\end{equation}

\noindent{where $p$ is an atomic proposition, $l$ is a non-empty interval of $\mathbb{R}_{+}$, and $\varphi$ is a well formed MTL formula. The temporal operator  $\diamond$ $(eventually)$ is defined as $\diamond_{l}\varphi := \top \mathcal{U}_l \varphi$}. The reader can refer to~\cite{MCFalsification} for robust semantics of MTL formulas.
% We assume that the unsafe set $U$ is given as an $n$-dimensional hyper-rectangle.

\subsection{Our Falsification algorithm}
We describe a simple \reachDest-based algorithm to obtain a falsifying trajectory to a given safety specification $\neg\Diamond_{l}U$, where $U \subseteq \mathbb{R}^n$ is the unsafe set.
We generate an anchor trajectory $\xi_{A}$, sample a state $z \in U$, and choose $t = \argminA_{t'\in l}\lVert \xi_{A}(t')-z\rVert$.
% We first generate a fixed number ($m$) of random states in $U$. Then for each of those states, 
We then invoke \reachDest sub-routine for generating trajectories until we obtain a counterexample ($\rho^k < 0$) to the given safety specification or bound $I$ is exhausted, where $\rho^k$ is the robustness of trajectory $\xi_A^{k}$. To be precise, in the  falsification run of \reachDest - (i) distance $d_a^k$ is replaced by robustness $\rho^k$, (ii) constraint $d_a^k > \delta$ is replaced by $\rho^k > 0$, and (iii) an additional constraint $x_t^k \notin U$ is added to the main \textbf{while} loop condition.
While it may be the case that $z$ is not reachable at time $t$, both these parameters primarily act as anchors to guide the procedure in obtaining a falsifying execution.

\subsection{Evaluation of Falsification techniques}
% While S-TaLiRo is optimized for  problems with general MTL specifications over signals, 
We evaluate our falsification algorithm against one of the widely used \emph{falsification} platforms, S-TaLiRo~\cite{MCFalsification}. 
Monte-Carlo sampling scheme in S-TaLiRo is sensitive to the ``temperature" parameter $\beta$, where the adaptation of $\beta$ is performed after every fixed number of iterations provided it is unable to find a counterexample by then.
We keep $\beta = 50$ which is the default value, and we consider $p=2, s=0.5$ for our \reachDest-based falsification scheme.  
% On the other hand, \reachDest is sensitive to the scaling factor $s$ where we increment $s$ during each \emph{course correction} and scale it down whenever distance to the destination starts to increase. We make S-TaLiRo adapt $\beta$ after every 20 iterations; whereas, we set our parameters as $p = 20$ and $s=0.01$.
%
Although adaptation parameters and mechanisms in both approaches are different, an upper bound ($\mathcal{B}$) on the number of trajectories is crucial to both of them. We fix $\mathcal{B}=100$ for systems \#1-\#16 and $\mathcal{B}=150$ for systems \#17-\#30 in S-TaLiRo. We consider $\mathcal{B}=50$ for \nexg as we notice that, if it can, it usually finds a trajectory of interest in notably less number of iterations. The sampling time is fixed as $0.01$.
We exclude cases where the initial reference trajectory $\xi_A$ is falsifying so as to minimize the bias induced by different distributions in different schemes. 
For a given pair of initial configuration $\theta$ and safety specification $\neg\Diamond_{l}U$ in each system, we report in Table~\ref{tab:result-staliro} the \emph{mean} of total trajectories ($k$) generated along with \emph{mean} robustness ($\rho$) computed over $250$ runs of respective techniques.

\begin{figure}[!b]
\centering     %%% not \center

\subfigure[from S-TaLiRo]{\label{fig:bench9-sig-staliro}\includegraphics[width=70mm, height=40mm]{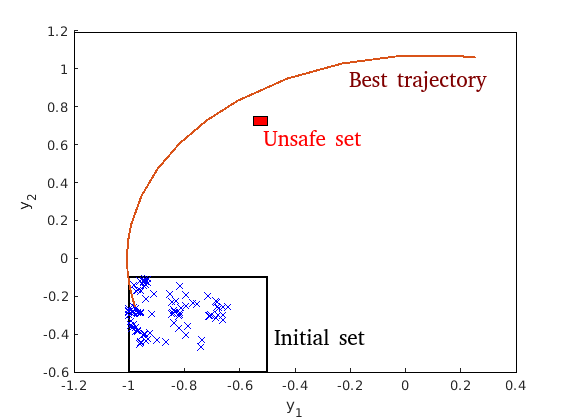}}
\subfigure[from NExG]{\label{fig:bench9-sig-ours}\includegraphics[width=68mm, height=40mm]{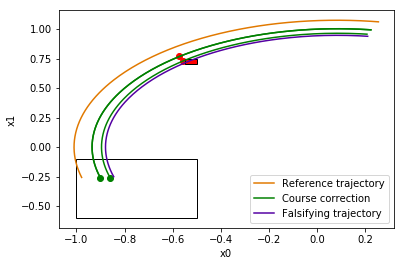}}
\label{fig:falsification-w-staliro}
\caption{\small{Falsification demonstrations. The red-colored box is the unsafe set and the inner while-colored box is the initial set. These demonstrations depict how NExG can potentially supplement other falsification platforms if they fail to find a falsifying execution.}}
\end{figure}

The evaluations exhibit that our algorithm not only takes a very few trajectories to converge but also  the counterexample obtained is relatively more robust in most cases. Unlike \textrm{NExG}, the performance of S-TaLiRo seems to deteriorate further with increase in the number of system dimensions and complexity. Even in scenarios where \reachDest generates more than $10$ trajectories, experiments indicate that it is able to reach the neighborhood around $U$ within fewer iterations. 
This observation motivated us to attempt to integrate both frameworks. In the case of non-convergence in S-TaLiRo, its best execution can be used as the input \emph{reference} trajectory  for $\mathcal{RD}$.
One such instance is shown in Figure~\ref{fig:bench9-sig-staliro} where S-TaLiRo is unable to find a falsifying trajectory within 100 iterations. 
We use its best sample as the reference $\xi_A$ for \reachDest and find $4^{th}$ trajectory to be a counterexample (Figure~\ref{fig:bench9-sig-ours}).
This exercise is performed for illustration purpose i.e., at present we manually port the best sample from S-TaLiRo to \nexg. One of the future tasks is to automate this integration. 
% A few more falsification results in respective schemes are depicted in Figure~\ref{fig:falsification-bench4-bench7-bench10} in Appendix.
%
Additionally, our approach - as a side effect - provides intuition about the course of exploration leading to the falsifying execution unlike scattered stochastically sampled states generated in S-TaLiRo.

Another important take away from this comparison is that if S-TaLiRo fails to find a counterexample for a given specification, the user is left with a sample of trajectories generated by S-TaLiRo and the execution that comes closest to falsifying the given safety specification.
Instead, in our case, the user can still access the inverse sensitivity approximator and manually (or algorithmically) probe nearby trajectories and proceed to discover a falsifying trajectory. Finally, S-TaLiRo's implementation platform is \texttt{MATLAB} while our framework is implemented in \texttt{Python}. We do not report the wall-clock time taken by respective frameworks as performance differences are expected due to their different implementation platforms.

One may analogize inverse sensitivity approximator to an offline hash of results, which may make the comparison with S-TaLiRo (which performs Falsification in an online fashion) seem unfair. This analogy, however, motivates us to make some clarifications and highlight a few important differences as a rationale behind the comparison. 

Note first that, in the training phase, we learn the inverse sensitivity based on trajectories starting from uniformly sampled points in the domain. In particular, instead of learning a system property specific to the task at hand, we are learning a generic system property that can be used in many applications.

Secondly, our intention to perform this comparative analysis is not to compete with S-TaLiRo, but to emphasize  the effectiveness of learning the sensitivity function in an offline manner to aid in state space exploration. Standard gradient based techniques are not designed to use an offline component because they only rely on the current simulation to falsify the given specification based on robustness, and information from previous simulations is discarded at each step.

In our systematic state space exploration, simulation at each successive step is guaranteed (under some assumptions) to get closer to a desirable execution. We agree that we pre-process data to learn system properties, but this learned information enable us to deliver more promise by guaranteeing convergence as supported by our evaluations. 

While we have primarily focused on neural network feedback control systems for our work, the presented  approach can also be applied to other nonlinear dynamical and hybrid systems. Evaluations with a few standard higher dimensional non-linear dynamical systems yields similar performance results. The falsification results in S-TaLiRo for \textit{LaubLoomis} (7-dim), \emph{Biological model I} (7-dim), and \emph{Biological model II} (9-dim) are ($\rho = 0.003, k = 112$), ($\rho = 0.046, k = 149$), and ($\rho = 0.155, k = 150$); whereas, the respective results in \nexg are ($\rho = 0.003, k = 21$), ($\rho = -0.0016, k = 15$), and ($\rho = 0.002, k = 29$).

\begin{table*}[ht]
\setlength{\tabcolsep}{7pt}
\scriptsize
\centering
\caption{Performance of falsification techniques. $k$ is the number of simulations generated and $\rho$ is the robustness. The parity of $\rho$ determines whether the execution satisfies ($\rho > 0$) or falsifies ($\rho < 0$) a given safety specification, whereas its magnitude determines how robust is the execution. NExG takes a very few iterations to find a counterexample with $\rho < 0$. \bcheckmark marks the scenarios with equally (or more) robust falsifying trajectory.}
\label{tab:result-staliro}
\begin{tabular}{|*{7}{c|} }
\hline
 & Initial & Safety & \multicolumn{2}{c}{~~S-TaLiRo}& \multicolumn{2}{|c|}{NExG}\\
System & configuration & specification &  Mean & Mean &  Mean & Mean\\
& $\theta$ & $\neg\Diamond_l U$ & $k$ & $\rho$ & $k$ &  $\rho$\\
\hline
\#1 & [(0.5, 1.5)$\wedge$(0.5, 1.5)] & \textcolor{blue}{$\neg\Diamond_{[0.7, 0.9]}
$}[(0.30, 0.35)$\wedge$(-1.1, -1.05)] & 12 & -0.01\bcheckmark & 3 & -0.01\bcheckmark  \\
\hdashline
\#2 & [(0.8, 1.2)$\wedge$(0.9, 1.2)] & \textcolor{blue}{$\neg\Diamond_{[0.7, 0.9]}$}[(1.50, 1.55)$\wedge$(0.20, 0.25)] & 24 & -0.006\bcheckmark & 3 & -0.005  \\
\hdashline
\#3 & [(0.4, 1.2)$\wedge$(0.4, 1.2)] & \textcolor{blue}{$\neg\Diamond_{[0.8, 1.0]}$}[(0.3, 0.4)$\wedge$(0.0, 0.1)] & 8 & -0.02\bcheckmark & 4 & -0.014  \\
\hdashline
\#4 & [(1.5, 2.0)$\wedge$(1.0, 1.5)] & \textcolor{blue}{$\neg\Diamond_{[0.7, 0.9]}$}[(1.15, 1.2)$\wedge$(-0.95, -0.90)] & 11 & -0.008\bcheckmark & 3 & -0.008\bcheckmark  \\
\hdashline
\#5 & [(0.2, 0.7)$\wedge$(0.2, 0.7)$\wedge$(0.2, 0.7)] & \textcolor{blue}{$\neg\Diamond_{[0.6, 0.8]}$}[(1.0, 1.05)$\wedge$(0.05, 0.1)$\wedge$(-1.15, -1.10)] & 39 & 0.008 & 17 & -0.005\bcheckmark  \\
\hdashline
\#6 & [(0.1, 0.6)$\wedge$(0.1, 0.6)$\wedge$(0.1, 0.6)] & \textcolor{blue}{$\neg\Diamond_{[0.7, 0.9]}$}[(0.10, 0.15)$\wedge$(0.0, 0.05)$\wedge$(0.10, 0.15)] & 21 & -0.005 & 3 &  -0.007\bcheckmark  \\
\hdashline
\#7 & [(0.2, 0.5)$\wedge$(0.2, 0.5)$\wedge$(0.2, 0.5)] & \textcolor{blue}{$\neg\Diamond_{[1.0, 1.2]}$}[(0.4, 0.45)$\wedge$(-0.3, -0.25)$\wedge$(-0.45, -0.4)] & 20 & -0.005 & 3 &  -0.007\bcheckmark  \\
\hdashline
\#8 & [(0.2, 0.5)$\wedge$(0.2, 0.5)$\wedge$(0.2, 0.5)] & \textcolor{blue}{$\neg\Diamond_{[1.0, 1.2]}$}[(0.05, 0.1)$\wedge$(0.25, 0.3)$\wedge$(-0.35, -0.3)] & 36 & 0.008 & 9 &  0.001\bcheckmark  \\
\hdashline
\multirow{2}{*}{\#9} & [(0.1, 0.4)$\wedge$(0.1, 0.4) & \textcolor{blue}{$\neg\Diamond_{[0.6, 0.8]}$}[(-0.15, -0.10)$\wedge$(-0.80, -0.75) & \multirow{2}{*}{76} &  \multirow{2}{*}{0.005\bcheckmark} & \multirow{2}{*}{22} &  \multirow{2}{*}{0.005\bcheckmark}  \\
& $\wedge$(0.1, 0.4)$\wedge$(0.1, 0.4)] & $\wedge$(0.0, 0.05)$\wedge$(-0.60, -0.55)]& & &  &\\
\hdashline
\multirow{2}{*}{\#10} & [(0.5, 1.0)$\wedge$(-1, -0.5) & \textcolor{blue}{$\neg\Diamond_{[0.8, 1.0]}$}[(-0.2, -0.15)$\wedge$(-1.05, -1.0) & \multirow{2}{*}{88} &  \multirow{2}{*}{0.01} & \multirow{2}{*}{5} &  \multirow{2}{*}{-0.005\bcheckmark}  \\
& $\wedge$(-0.5, 0)$\wedge$(0.5, 1)] & $\wedge$(0.2, 0.25)$\wedge$(0.25, 0.3)]& & & &\\
\hdashline
\multirow{2}{*}{\#11} & [(-1, -0.5)$\wedge$(-0.6, -0.1) & \textcolor{blue}{$\neg\Diamond_{[1.0, 1.2]}$}[(-0.55, -0.5)$\wedge$(0.7, 0.75) & \multirow{2}{*}{95} &  \multirow{2}{*}{0.031} & \multirow{2}{*}{3} &  \multirow{2}{*}{-0.016\bcheckmark}  \\
& $\wedge$(0.2, 0.7)$\wedge$(-0.5, 0)] & $\wedge$(0.65, 0.7)$\wedge$(0.25, 0.3)]& & & &\\
\hdashline
\multirow{2}{*}{\#12} & [(-1, -0.5)$\wedge$(-0.6, -0.1) & \textcolor{blue}{$\neg\Diamond_{[1.0, 1.2]}$}[(-0.55, -0.5)$\wedge$(0.25, 0.3) & \multirow{2}{*}{98} &  \multirow{2}{*}{0.058} & \multirow{2}{*}{6} &  \multirow{2}{*}{-0.009\bcheckmark}  \\
& $\wedge$(0.2, 0.7)$\wedge$(-0.5, 0)] & $\wedge$(-0.05, 0.0)$\wedge$(-0.3, -0.25)]& & & &\\
\hdashline
\multirow{2}{*}{\#13} & [(9.3, 9.7)$\wedge$(-4.7, -4.3) & \textcolor{blue}{$\neg\Diamond_{[0.8, 1.0]}$}[(8.4, 8.45)$\wedge$(-3.55, -3.5) & \multirow{2}{*}{81} &  \multirow{2}{*}{0.009} & \multirow{2}{*}{4} &  \multirow{2}{*}{-0.01\bcheckmark}  \\
& $\wedge$(2, 2.4)$\wedge$(1.3, 1.7)] & $\wedge$(2.5, 2.55)$\wedge$(2.2, 2.25)]& & & &\\
\hdashline
\multirow{2}{*}{\#14} & [(1.0, 1.5)$\wedge$(1.0, 1.5) & \textcolor{blue}{$\neg\Diamond_{[1.0, 1.2]}$}[(1.55, 1.6)$\wedge$(0.25, 0.3) & \multirow{2}{*}{59} &  \multirow{2}{*}{-0.002\bcheckmark} & \multirow{2}{*}{6} &  \multirow{2}{*}{-0.002\bcheckmark}  \\
& $\wedge$(1, 1.5)$\wedge$(1, 1.5)] & $\wedge$(-0.65, -0.6)$\wedge$(-1.2, -1.15)]& & & &\\
\hdashline
\multirow{2}{*}{\#15} & [(1.0, 1.4)$\wedge$(1.0, 1.4) & \textcolor{blue}{$\neg\Diamond_{[0.4, 0.6]}$}[(0.90, 0.95)$\wedge$(0.65, 0.7) & \multirow{2}{*}{55} &  \multirow{2}{*}{-0.002} & \multirow{2}{*}{8} &  \multirow{2}{*}{-0.007\bcheckmark}  \\
& $\wedge$(1.0, 1.4)$\wedge$(1.0, 1.4)] & $\wedge$(-1.8, -1.75)$\wedge$(-1.20, -1.15)]& & & &\\
\hdashline
\multirow{2}{*}{\#16} & [(0.0, 0.3)$\wedge$(0.0, 0.3) & \textcolor{blue}{$\neg\Diamond_{[0.8, 1.0]}$}[(0.05, 0.1)$\wedge$(-0.05, 0.0) & \multirow{2}{*}{47} &  \multirow{2}{*}{-0.002} & \multirow{2}{*}{9} &  \multirow{2}{*}{-0.003\bcheckmark}  \\
& $\wedge$(0.0, 0.3)$\wedge$(-0.3, 0.0)] & $\wedge$(-0.05, 0.0)$\wedge$(0.0, 0.05)]& & & &\\
\hdashline
\multirow{2}{*}{\#17} & [(90.0, 92.0)$\wedge$(32.0, 32.5)$\wedge$(0.0, 0.0) & \textcolor{blue}{$\neg\Diamond_{[0.7, 0.9]}$}[(113.5, 114.0)$\wedge$(31.3, 31.4)$\wedge$(-1.60, -1.55) & \multirow{2}{*}{150} &  \multirow{2}{*}{0.10} & \multirow{2}{*}{15} &  \multirow{2}{*}{-0.001\bcheckmark}  \\
& $\wedge$(10.0, 11.0)$\wedge$(30.0, 30.5)$\wedge$(0.0, 0.0)] & $\wedge$(32.0, 32.5)$\wedge$(29.5, 30.0)$\wedge$(-0.10, -0.05)] & & & &\\
\hdashline
\multirow{2}{*}{\#18} & [(90.0, 92.0)$\wedge$(32.0, 32.5)$\wedge$(0.0, 0.0) & \textcolor{blue}{$\neg\Diamond_{[0.7, 0.9]}$}[(116.5, 117.0)$\wedge$(31.6, 31.7)$\wedge$(-1.65, -1.6) & \multirow{2}{*}{150} &  \multirow{2}{*}{0.81} & \multirow{2}{*}{14} &  \multirow{2}{*}{-0.002\bcheckmark}  \\
& $\wedge$(10.0, 11.0)$\wedge$(30.0, 30.5)$\wedge$(0.0, 0.0)] & $\wedge$(34.5, 35.0)$\wedge$(29.5, 30.0)$\wedge$(-0.45, -0.35)] & & & &\\
\hdashline
\multirow{2}{*}{\#19} & [(90.0, 92.0)$\wedge$(32.0, 32.5)$\wedge$(0.0, 0.0) & \textcolor{blue}{$\neg\Diamond_{[0.8, 1.0]}$}[(120.0, 120.3)$\wedge$(31.1, 31.2)$\wedge$(-1.75, -1.7) & \multirow{2}{*}{149} &  \multirow{2}{*}{0.11} & \multirow{2}{*}{9} &  \multirow{2}{*}{-0.007\bcheckmark}  \\
& $\wedge$(10.0, 11.0)$\wedge$(30.0, 30.5)$\wedge$(0.0, 0.0)] & $\wedge$(37.0, 38.0)$\wedge$(30.5, 31.0)$\wedge$(0.1, 0.2)] & & & &\\
\hdashline
\multirow{2}{*}{\#20} & [(90.0, 92.0)$\wedge$(32.0, 32.5)$\wedge$(0.0, 0.0) & \textcolor{blue}{$\neg\Diamond_{[0.6, 0.8]}$}[(112.8, 112.9)$\wedge$(31.7, 31.8)$\wedge$(-1.55, -1.5) & \multirow{2}{*}{145} &  \multirow{2}{*}{0.14} & \multirow{2}{*}{34} &  \multirow{2}{*}{0.015\bcheckmark}  \\
& $\wedge$(10.0, 11.0)$\wedge$(30.0, 30.5)$\wedge$(0.0, 0.0)] & $\wedge$(31.0, 31.5)$\wedge$(30.1, 30.2)$\wedge$(0.0, 0.1)] & & & &\\
\hline
\end{tabular}

\end{table*}

\section{Discussion and Future Work}

In this work, we have proposed a new state space exploration technique \nexg that is an improvement over existing state space approaches. 
In addition to out-performing state of the art falsification techniques, our technique enables the control designer to develop custom algorithms for state space exploration and generate trajectories that navigate the state space along with additional constraints.

We also would suggest two additional use-cases of our approach: generating near-miss trajectory and estimated trajectories. Instead of continuing to run the \reachDest algorithm to search for execution that reaches the target state, a control designer can terminate early and use a custom state space exploration using inverse sensitivity to generate \emph{near miss} safety instances where the trajectory approaches the set of unsafe states within a threshold.
\begin{figure}[!ht]
\centering     %%% not \center
% \subfigure[System \#2]{\label{fig:bench2_fwdsen}\includegraphics[width=41mm]{Figures/Benchmark2/bench2_fwdsen.png}}
\includegraphics[width=60mm]{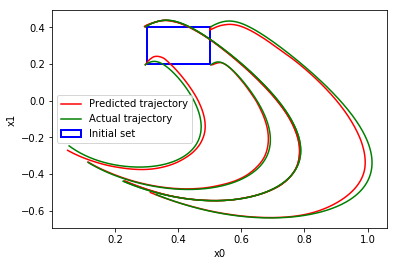}
\caption{\small{Predicting system trajectories using sensitivity approximation. We use the sensitivity approximator to perturb an initial anchor trajectory starting at the centroid of the rectangle to predict the four trajectories starting from its corners.}}
\label{fig:fwd-sen-bench4}
\end{figure}
Secondly, if invoking the exact simulation engine becomes computationally expensive (e.g. exploring a high-dimensional space), then the control designer can generate approximate trajectories from a given initial state by using a sensitivity estimator.
An example of generating such approximate trajectories is given in Figure~\ref{fig:fwd-sen-bench4}.

\textbf{Future Work:} As our choice of network architecture, training data, scaling factor and other parameters is mostly empirical, we plan to investigate different architectures to accelerate network training and explore other parameters configurations to improve performance of \nexg. 
% Trajectory gradient based approximations helped us achieve an order of magnitude performance gain  with reduced training time compared to a recent work. The evaluation of our approach and S-TaLiRo to perform falsification on various systems illustrates that our technique takes relatively fewer iterations to find a falsifying trajectory. 
%
% We also highlighted the importance of integrating both frameworks to achieve better results.
%
We aim to extend this work to handle more generic systems such as feedback systems with environmental inputs. 
%
% Another possible direction is to mask the dependency of our approach on time $t$. 
%
Further, we would like to enhance our falsification technique to the general class of STL specification and automate its integration with S-TaLiRo. We finally aim to explore ways to integrate our method into frameworks used for generating adversarial executions during control synthesis.

\bibliographystyle{plain} % Include this if you use bibtex 
\bibliography{nexgArxiv} 
 
% \vfill

% \input{revision-letter}
\end{document}